\numberwithin{equation}{section}
\newtheorem{theorem}{Theorem}[section]
\newtheorem{definition}{Definition}
\newtheorem{lemma}{Lemma}[section]
\newproof{proof}{Proof}
\newdefinition{remark}{Remark}
\begin{document}
	\begin{frontmatter}

		\title{\Large \bf Backward bifurcation, basic reinfection number and robustness of a SEIRE epidemic model with reinfection  \tnoteref{t1}}

		\tnotetext[t1]{This work is supported by the National Natural Science Foundation of China (U21A20206) and Natural Science Foundations of Henan (192102310089, 202300410045).}
		\author[els]{Shaoli Wang \corref{cor1}}
		\ead{wslheda@163.com }
			\author[els]{Tengfei Wang }
		\author[els]{Ya-nen Qi }
		\author[wlu]{Fei Xu \corref{cor1}} \ead{fxu.feixu@gmail.com}

		\cortext[cor1]{Corresponding authors.}
		
		\address[els]{School of Mathematics and Statistics, Henan University, Kaifeng 475001, Henan, PR China }
		\address[wlu]{Department of Mathematics, Wilfrid Laurier University, Waterloo, Ontario, Canada \ N2L 3C5}

		\begin{abstract}
			
			\begin{spacing}{1.0}
				
				Recent evidences show that individuals who recovered from COVID-19 can be reinfected. However, this phenomenon has rarely been studied using mathematical models. In this paper, we propose a SEIRE epidemic model to describe the spread of the epidemic with reinfection.
				We obtain the important thresholds $R_0$ (the basic reproduction number) and  $R_c$ (a threshold less than one).
				Our investigations show that when $R_0>1$, the system has an endemic equilibrium, which is globally asymptotically stable.
When $R_c<R_0<1$, the epidemic system exhibits bistable dynamics. That is, the system has backward bifurcation  and the disease  cannot be eradicated.
				In order to eradicate the disease, we must ensure that the basic reproduction number $R_0$ is less than $R_c$. The basic reinfection number is obtained to measure the reinfection force, which turns out to be a new tipping
				point for disease dynamics. We also give definition of robustness,  a new concept to  measure the difficulty of completely eliminating the disease for a bistable epidemic system. Numerical simulations are carried out to verify the conclusions.
				
			\end{spacing}

		\end{abstract}
		
		\begin{keyword} SEIRE epidemic model; Global asymptotical stability; Backward bifurcation; Basic reinfection number; Robustness
			
		\end{keyword}

	\end{frontmatter}

	\section{Introduction}
	
Since ancient times, human beings have suffered from various epidemic diseases. As early as the 14th century, the plague epidemic caused 25 million deaths, and the European population was reduced by a quarter. The Spanish flu death toll from 1918 to 1920 exceeded 25 million. Since the 1981 AIDS pandemic, about 39 million people have died of the disease. Since the 1970s, new infectious diseases have been discovered almost every year. In the past 30 years, more than 40 new infectious diseases have appeared in the world, which has become a key and hot issue of global public health \cite{1}. At the beginning of 2020, the infection caused by the ``Novel Coronavirus" spread from Wuhan, a major city in central China, to the whole country coinciding with the peak of the Spring Festival travel season. In just one month, the number of infections exceeded 60,000, far exceeding the number of infections caused by Severe Acute Respiratory Syndrome Coronavirus (SARS-CoV) and Middle East Respiratory Syndrome Coronavirus (MERS-CoV) \cite{2}. Infectious diseases not only threaten human health and life, but also have a significant negative impact on the global economy. Therefore, research on infectious diseases is very meaningful.

	Bifurcation, especially the backward bifurcation, characterizes the dynamic behavior of many infectious disease models. \cite{20,8,Hadeler,Dushoff,Martcheva,14}.
Backward bifurcation reveals an important property in epidemic models, $R_0<1$ does not guarantee the eradication of  disease.
 In order to eradicate the disease, we must ensure that the basic reproduction number $R_0$ is less than $R_c$. Therefore, controlling such diseases is challenging, which has attracted a large number of scholars to explore in this area \cite{6,15,17,32,Zhang,7,Song}.

Although vaccination or recovery from infection provides immune protection, the interaction between  host and the carrier species will  reduce human immunity due to the complexity of carrier-borne diseases \cite{23}. Eventually a second infection will occur as the antibodies produced gradually diminish. Therefore, secondary infection in epidemiology has attracted the attention of scholars \cite{8,10,11,22}.

Heroin use is common in border areas. Some ex-addicts will start taking drugs again, creating a challenge to get rid of the addiction completely. In order to study drug use and formulate appropriate drug rehabilitation measures, scholars have established mathematical modeling to study drug addiction and drug rehabilitation problems. Many mathematical models have been constructed to address the relapse of drug addicts, indicating that relapse may occur in individuals with a history of drug use. Therefore, it is necessary to distinguish   the addiction rates of susceptible individuals with a history of drug use and those without a history of drug use. backward bifurcation is widely observed in such models \cite{24,25,26,27,28,29,30,31}.

Reinfection is also  discussed in tuberculosis (TB) exogenous models \cite{14,Song,Feng}, and sleeper effects models \cite{Song,Colon-Rentas}.	Recently, there have been reports about secondary infection with the novel coronavirus. There is evidence that antibody levels in COVID-19 patients gradually decline months after infection, making secondary infection possible \cite{33,34}. The number of people re-infected by the novel coronavirus is increasing, suggesting that for some people, immunity rapidly declines after contracting the virus.
 A MedRxiv study showed that severely ill patients infected with Covid-19 for the first time may develop ineffective antibodies and are more likely to develop serious secondary infections \cite{37,38}.
	
Based on the discussion above, we establish the following SEIRE model with re-infection under the assumption that all the infected individuals become exposed ones and the exposed is infectious.
	
	\begin{equation}\label{e1}
		\begin{cases} \frac{dS}{dt}=b-\mu S-\beta_{1} SE-\beta_{2} SI,\\
			\frac{dE}{dt}=\beta_{1} SE+\beta_{2} SI+\alpha_1 RE+\alpha_2 RI\\
			\hspace{0.7cm}-(\mu+k)E, \\
			\frac{dI}{dt}=k E-(\mu+\gamma)I,\\
			\frac{dR}{dt}=\gamma I-\mu R-\alpha_1 RE-\alpha_2 RI,
			\end {cases}
		\end{equation}
		with initial conditions
		\begin{equation}\label{e2}
			S(0)\geq0, E(0)\geq0, I(0)\geq0, R(0)\geq0.
		\end{equation}
		Here, $S,E,I,$ and $R$ are the numbers of susceptible, exposed, infected, and recovered individuals at time $t$. All the parameters in our model are positive. In the above model, $b$ is the constant recruitment rate in susceptible compartment only, $\mu$ is the natural mortality rate of each compartment, $\beta_{1}$ is the rate of susceptible individuals entering the exposed compartment due to contact with exposed individuals, $\beta_{2}$ is the rate of susceptible individuals entering the exposed compartment due to contact with infective individuals, $k$ is the rate of exposed individuals infected with disease into the infection compartment, $\gamma$ is the rate of infected individuals who have recovered through treatment, $\alpha_1$ is the rate of recovering individuals re-entering the exposed compartment due to contact with exposed individuals, and $\alpha_2$ is the rate of recovering individuals re-entering the exposed compartment due to contact with infected individuals.
		
		\begin{figure}[!h]
			\begin{center}
				{\rotatebox{0}{\includegraphics[width=0.4 \textwidth,
						height=30mm]{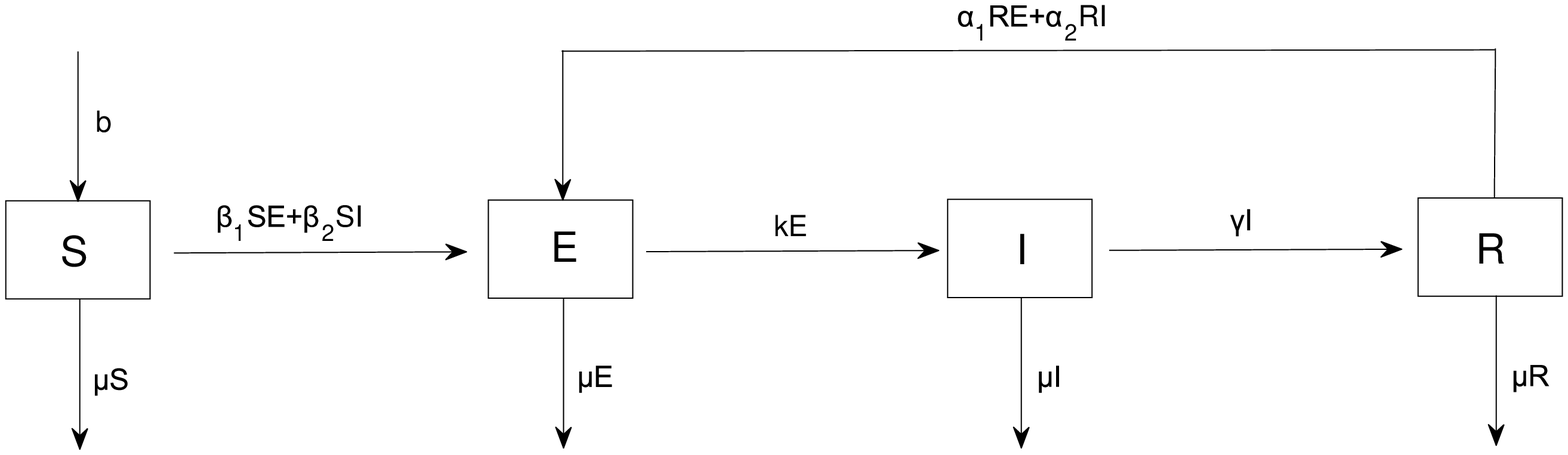}}}
				\caption{
					\footnotesize  The schematic diagram explaining the transmission dynamics of the epidemic disease.}\label{F51}
			\end{center}
		\end{figure}

		In order to facilitate the operation of the model, we do dimensionality reduction processing. The total population is denoted by $N=S+E+I+R. $ Then adding all the equations of  system \eqref{e1}, we obtain
		\begin{equation}\label{e3}
			\frac{dN}{dt}=b-\mu N.
		\end{equation}
		Suppose $N^{*}$ is the positive equilibrium of system \eqref{e3}, then $ N^{*}=\frac{b}{\mu}$. As the disease spreads, assume that the total population is in a stable demographic state, which gives $N^{*}=S+E+I+R$. So, $R$ can be replaced by $N^{*}-S-E-I$. Therefore, the system \eqref{e1} is simplified to the following three-dimensional system
		\begin{equation}\label{e4}
			\begin{cases}
				\frac{dS}{dt}=\mu N^{*}-\mu S-\beta_{1} SE-\beta_{2} SI, \\
				\frac{dE}{dt}=\beta_{1} SE+\beta_{2} SI\\
				\hspace{0.7cm}+\alpha_1 (N^{*}-S-E-I)E\\
				\hspace{0.7cm}	+\alpha_2 (N^{*}-S-E-I)I-(\mu+k)E, \\
				\frac{dI}{dt}=k E-(\mu+\gamma)I,\\
			\end{cases}
		\end{equation}
		with initial conditions \begin{equation}\label{e5}
			S(0)\geq0, E(0)\geq0, I(0)\geq0.
		\end{equation}

		The rest of this article is organized as follows:  In section \ref{sec3}, the basic properties of the solution of the model are discussed, including positivity and boundedness, and the basic reproduction number $R_0$ of the model is obtained by using the next-generation matrix method. In section \ref{sec4}, we discuss the existence of equilibria and analyze the stability of the system. In section \ref{sec5}, we discuss the existence of backward bifurcation in our model. In order to verify the analysis results obtained, numerical simulation is carried out in Section \ref{sec6}. We obtain the basic reinfection number and introduce the Robust of bistable system in Section \ref{sec55}. Last, we conclude the paper with discussions in Section \ref{sec7}.
		
		\section{Basic properties of the model}\label{sec3}
		
		\subsection{ Positivity and Boundedness of Solutions}
		
		\begin{theorem}\label{th3.1}
			Every solution of \eqref{e4} with positive initial conditions \eqref{e5} defined in $[0,\infty)$, will remain positive for all $t>0$.
		\end{theorem}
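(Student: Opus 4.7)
The plan is to establish that the non-negative orthant is positively invariant under the flow of (\ref{e4}), by combining variation-of-parameters representations for each equation with a simultaneous first-exit-time argument.

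First, local existence and uniqueness of a $C^{1}$ solution on a maximal interval $[0,T_{\max})$ follows from the Picard--Lindel\"of theorem, since the right-hand side of (\ref{e4}) is polynomial (hence locally Lipschitz) in $(S,E,I)$. Next, I would exploit the quasi-linear structure of each equation. The $S$-equation can be rewritten as $\dot S=\mu N^{\ast}-\phi(t)S$ with $\phi(t)=\mu+\beta_{1}E(t)+\beta_{2}I(t)$; here $\phi$ is merely continuous and no sign assumption on $E,I$ is needed. The integrating-factor formula then yields
\[
S(t)=S(0)\,e^{-\int_{0}^{t}\phi(s)\,ds}+\mu N^{\ast}\int_{0}^{t}e^{-\int_{\tau}^{t}\phi(s)\,ds}\,d\tau\;\geq\;0.
\]
A parallel formula $I(t)=I(0)e^{-(\mu+\gamma)t}+k\int_{0}^{t}E(s)e^{-(\mu+\gamma)(t-s)}\,ds$ gives $I(t)\geq 0$ on any interval where $E\geq 0$.

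For $E$, the term $\alpha_{2}(N^{\ast}-S-E-I)I$ depends on the sign of $R:=N^{\ast}-S-E-I$, so I would temporarily return to the original four-dimensional system (\ref{e1}). Regrouping gives $\dot E+[(\mu+k)-\beta_{1}S-\alpha_{1}R]E=\beta_{2}SI+\alpha_{2}RI$ together with $\dot R+(\mu+\alpha_{1}E+\alpha_{2}I)R=\gamma I$, and variation of parameters shows $E(t),R(t)\geq 0$ whenever the remaining variables are non-negative on $[0,t]$.

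To close the circular dependence, set $T^{\ast}=\sup\{t\in[0,T_{\max}):S,E,I,R\geq 0\text{ on }[0,t]\}$. If $T^{\ast}<T_{\max}$, then by continuity some variable first touches zero at $t=T^{\ast}$ while the others remain $\geq 0$. Checking each boundary case, $\dot S|_{S=0}=b>0$, $\dot E|_{E=0}=(\beta_{2}S+\alpha_{2}R)I\geq 0$, $\dot I|_{I=0}=kE\geq 0$, and $\dot R|_{R=0}=\gamma I\geq 0$, so Nagumo's tangency condition holds on the non-negative orthant and forces $T^{\ast}=T_{\max}$. Positivity then transfers to the reduced system (\ref{e4}) via the identity $S+E+I+R=N^{\ast}$. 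The main obstacle is precisely this circular coupling among $S,E,I,R$: none of the four integral representations delivers positivity in isolation, and handling it cleanly requires either the first-exit-time argument above, or equivalently a Gr\"onwall-type bootstrap done in the original four-dimensional formulation where every compartment appears explicitly with its correct sign.
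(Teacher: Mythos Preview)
Your proposal is correct and in fact more careful than the paper's own argument. The paper treats only the $S$-equation explicitly: it drops the forcing term $\mu N^{\ast}$, multiplies by the integrating factor $\exp\bigl(\int_{0}^{t}(\mu+\beta_{1}E+\beta_{2}I)\,ds\bigr)$, and integrates to obtain $S(t)\geq S(0)\exp(-\int_{0}^{t}F\,ds)>0$. For $E$, $I$, and $R$ it simply writes ``Similarly, we can prove that $E(t)\geq 0$, $I(t)\geq 0$ and $R(t)>0$,'' without confronting the circular dependence you identified.

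So the shared core is the same variation-of-parameters/integrating-factor device, but you add the ingredient the paper omits: the first-exit-time (Nagumo tangency) argument in the full four-dimensional system, checking $\dot S|_{S=0}=b>0$, $\dot E|_{E=0}=(\beta_{2}S+\alpha_{2}R)I\geq 0$, $\dot I|_{I=0}=kE\geq 0$, $\dot R|_{R=0}=\gamma I\geq 0$. This is exactly what is needed to break the circularity, since the ``similarly'' for $E$ would otherwise founder on the sign of $N^{\ast}-S-E-I$. Your lift to (\ref{e1}) is the right move here, because the reduced system (\ref{e4}) is only meant to be studied on the simplex $S+E+I\leq N^{\ast}$; working in four dimensions with $R\geq 0$ makes that constraint automatic and keeps every coefficient in the integral representations non-negative. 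In short: same starting idea, but your closure argument actually fills the gap the paper leaves open.
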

		
		\begin{proof}
			
			From the first equation of system \eqref{e4}, we get
			\begin{eqnarray*}
				\frac{dS(t)}{dt}=\mu N^{*}-\mu S(t)-\beta_{1} S(t)E(t)-\beta_{2} S(t)I(t)
				\\
				\geq-\mu S(t)-\beta_{1} S(t)E(t)-\beta_{2} S(t)I(t).
			\end{eqnarray*}
			Then,
			\begin{equation}\label{e6}
				\frac{dS(t)}{dt}+\left(\mu+\beta_{1} E(t)+\beta_{2} I(t)\right)S(t)\geq 0.
			\end{equation}
			Let $F(t)=\mu+\beta_{1} E(t)+\beta_{2} I(t)$. Multiplying both sides of inequality \eqref{e6}  by $exp\left(\int^t_0F(s)ds\right) $  yields
			$$
			exp\left(\int^t_0F(s)ds\right)\cdot\frac{dS(t)}{dt}+F(t)exp\left(\int^t_0F(s)ds\right)\cdot S(t)$$ $$\geq0,$$
			Then,
			\[
			\frac{d}{dt}\left(exp(\int^t_0F(s)ds)\cdot S(t)\right)\geq0.
			\]
			Integrating the above inequality from 0 to $t$, we get
			$$
			\int_0^t\frac{d}{ds}\left[exp\left(\int^s_0(\mu+\beta_{1} E(u)+\beta_{2} I(u))du\right)\cdot S(s)\right]ds$$
			$$\geq0.$$
			Then,
			\[
			S(t)\geq S(0)\cdot exp\left(-\int^t_0(\mu+\beta_{1} E(s)+\beta_{2} I(s))ds\right).
			\]
			Thus, we can get $S(t)>0$. Similarly, we can prove that $E(t)\geq0, I(t)\geq0$ and $R(t)>0.$
		\end{proof}
		\begin{theorem}\label{th3.2}
			Every solution of \eqref{e4} in $\mathbb{R}^4_+$ is bounded.
		\end{theorem}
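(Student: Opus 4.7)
The plan is to reduce the whole boundedness question to the scalar linear ODE $dN/dt = b - \mu N$ for $N = S + E + I + R$ derived in \eqref{e3}. Solving this equation explicitly gives
$$
N(t) = \frac{b}{\mu} + \left(N(0) - \frac{b}{\mu}\right) e^{-\mu t},
$$
so that $N(t) \leq \max\{N(0), b/\mu\}$ for every $t \geq 0$ and $N(t) \to N^{*} = b/\mu$ as $t \to \infty$. Combined with the positivity already established in Theorem \ref{th3.1}, this forces $0 \leq S(t), E(t), I(t), R(t) \leq N(t)$, which immediately delivers a uniform a priori bound on every component. This argument is really the heart of the proof; everything else is bookkeeping.

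For the reduced three-dimensional system \eqref{e4}, which was obtained by imposing $S + E + I + R \equiv N^{*}$, I would phrase boundedness in terms of positive invariance of the biologically natural simplex
$$
\Omega = \bigl\{(S,E,I) \in \mathbb{R}^3_+ : S + E + I \leq N^{*}\bigr\}.
$$
To check invariance, I would inspect the vector field on $\partial \Omega$. The coordinate faces $S = 0$, $E = 0$, $I = 0$ are handled by Theorem \ref{th3.1}. On the remaining face $S + E + I = N^{*}$, the re-infection terms $\alpha_1 (N^{*}-S-E-I)E$ and $\alpha_2 (N^{*}-S-E-I)I$ vanish; summing the three equations of \eqref{e4} then yields
$$
\frac{d(S+E+I)}{dt}\bigg|_{S+E+I = N^{*}} = \mu N^{*} - \mu N^{*} - \gamma I = -\gamma I \leq 0,
$$
so the flow points into (or along) $\Omega$ and trajectories starting in $\Omega$ stay in $\Omega$ for all $t \geq 0$. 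In particular each of $S, E, I$ is bounded above by $N^{*}$.

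I do not expect a genuine obstacle here; this is a routine verification rather than an argument with a hard step. The only minor awkwardness is that the statement references $\mathbb{R}^4_+$ while \eqref{e4} is three-dimensional. I would treat this as a typo for $\mathbb{R}^3_+$, or equivalently as the corresponding assertion for the original four-compartment system \eqref{e1}; under either reading the conclusion follows from the $dN/dt = b - \mu N$ identity above. One should also note that the bound is ultimately uniform in initial data from any bounded set, which is the form needed for the subsequent dissipativity and global-stability arguments in later sections.
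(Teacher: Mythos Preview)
Your proposal is correct and follows essentially the same approach as the paper: solve the linear ODE $dN/dt = b - \mu N$ explicitly, observe that $N(t)$ is bounded (and converges to $b/\mu$), and combine with the positivity from Theorem~\ref{th3.1} to bound each component. Your additional positive-invariance argument for the simplex $\Omega$ in the reduced system is more than the paper provides, but it is a natural and useful supplement rather than a different route.
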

		
		\begin{proof}
			From \eqref{e3}, we have
			\[
			\frac{dN(t)}{dt}=b-\mu N(t).
			\]
			Then we can get
			\[
			N(t)=\frac{b}{\mu}(1-e^{-\mu t})+N(0)e^{-\mu t}.
			\]
			Thus, we have
			\[
			\lim_{t\rightarrow\infty}N(t)\rightarrow\frac{b}{\mu}.
			\]
            Combined with the positivity of the solution, we can obtain the boundedness of the solution.
		\end{proof}
		
		\subsection{ The basic reproduction number}
		
		The threshold parameter $R_0$ gives the average number of infections transmitted by a single infected individual among fully susceptible individuals. To find $R_0$, we follow the next-generation matrix method proposed by van den Driessche and Watmough \cite{13}. Let us consider $X=(E, I, S)$ and rewrite system \eqref{e4} as $\frac{d\mathcal{X}}{dt}=\mathbb{F}-\mathbb{V}$, where $\mathbb{F}$ is the rate at which new infections occur, and $\mathbb{V}$ is all other traffic inside and outside of each compartments. So, we have
		$$\mathbb{F}=\left[
		\begin{array}{cccc}
			m\\
			0\\
			0\\
		\end{array}
		\right],$$
		where $m=\beta_1SE+\beta_2SI+\alpha_1 (N^*-S-E-I)E+ \alpha_2 (N^*-S-E-I)I,$
		and
		$$\mathbb{V}=\left[
		\begin{array}{cccc}
			(\mu+k)E\\
			(\mu+\gamma)I-kE\\
			-\mu N^*+\mu S+\beta_1SE+\beta_2SI\\
		\end{array}
		\right].$$
		
		The system \eqref{e4} always admits a disease-free equilibrium $Q_0=(N^*,0,0)$. Then, the jacobian matrices of $\mathbb{F}$ and $\mathbb{V}$ at $Q_0$ are given by
		$$\mathbb{DF_\mathrm{Q_0}}=\left[
		\begin{array}{cccc}
			F_{2\times2}     & 0\\
			0        & 0\\
		\end{array}
		\right]$$
		and
		$$\mathbb{DV_\mathrm{Q_0}}=\left[
		\begin{array}{cccc}
			V_{2\times2}     & 0\\
			M        & \mu\\
		\end{array}
		\right],$$
		where
		$$F=\left[
		\begin{array}{cccc}
			\beta_1 N^*     & \beta_2 N^*\\
			0        & 0\\
		\end{array}
		\right],$$
		$$V=\left[
		\begin{array}{cccc}
			\mu+k     & 0\\
			-k       &\mu+\gamma \\
		\end{array}
		\right] $$ and
		$$M=\left(\begin{array}{cccc}
			\beta_1 N^*    &\beta_2 N^*\\
		\end{array}
		\right).$$
		The form of the next generation matrix is
		\[
		FV^{-1}=\left[
		\begin{array}{cccc}
			\frac{\beta_1 N^*}{\mu+k}+\frac{\beta_2 kN^*}{(\mu+k)(\mu+\gamma)}     & \frac{\beta_2 N^*}{\mu+\gamma} \\
			0        & 0\\
		\end{array}
		\right].
		\]
		
		Now, according to Theorem 2 in  \cite{13}, the spectral radius $\rho$ of the matrix $FV^{-1}$ is the maximum eigenvalue of $FV^{-1}$, which gives the basic reproduction number $R_0$ of the system \eqref{e4}. Thus, we obtain
		\begin{equation*}\label{e7}
			R_0=\frac{\beta_1 N^*}{\mu+k}+\frac{\beta_2 kN^*}{(\mu+k)(\mu+\gamma)}.
		\end{equation*}

		\section{Existence and Stability analysis of equilibria}\label{sec4}
		
		\subsection{Existence of equibria}
		It is clear that system \eqref{e4} always admits a disease-free equilibrium $Q_0=(N^*,0,0)$. Then, we investigate the existence of the positive equilibrium $Q^*=(S^*,E^*,I^*)$ of system \eqref{e4}. In order to find the existence conditions of $Q^*$, we need to solve the following equations
		\begin{equation}\label{e9}
			\begin{cases}
				\mu N^*-(\mu+\beta_1 E^*+\beta_2 I^*)S^*=0, \\
				\beta_1 S^*E^*+\beta_2 S^*I^*\\
				\hspace{0.3cm}+\alpha_1 (N^*-S^*-E^*-I^*)E^*\\
				\hspace{0.3cm}+\alpha_2 (N^*-S^*-E^*-I^*)I^*\\
				\hspace{0.3cm}-(\mu+k)E^*=0, \\
				kE^*-(\mu+\gamma)I^*=0.
			\end{cases}
		\end{equation}
		Solving the third equation of \eqref{e9} to get $E^*=\frac{\mu+\gamma}{k}I^*$ and substituting the value of $E^*$ into the first equation of \eqref{e9}, we obtain $S^*=\frac{\mu kN^*}{\mu k+[\beta_1 (\mu+\gamma)+\beta_2 k]I^*}$.  Finally, substituting the values of $S^*$ and $E^*$ into the second  equation of \eqref{e9}, we get a quadratic equation about $I^*$ as follows
		\begin{equation}\label{e10}
			b_2 (I^*)^2+b_1 I^*+b_0=0,
		\end{equation}
		where
		\begin{eqnarray*}
			b_{2}&=&[\beta_1 (\mu+\gamma)+\beta_2 k]\{k^2\alpha_2\\
			&&+(\mu+\gamma)[k(\alpha_1+\alpha_2)+\alpha_1(\mu+\gamma)]\}, \\
			b_{1}&=&\mu k^3\alpha_2+\mu k\alpha_1 (\mu+\gamma)^2\\
			&&+\mu k^2(\mu+\gamma)(\alpha_1+\alpha_2) \\
			&&-k(\mu+\gamma)[\alpha_1N^*-(\mu+k)]\\
			&&\cdot[\beta_1(\mu+\gamma)+\beta_2 k]\\
			&&-k^2\alpha_2N^*[\beta_1(\mu+\gamma)+\beta_2 k],\\
			b_{0}&=&\mu k^2(\mu+k)(\mu+\gamma)(1-R_0).
		\end{eqnarray*}
		Obviously, the number of positive roots of polynomial \eqref{e10} depends on the signs of $b_0, b_1$ and $b_2$. This can be analyzed by applying Descarte's rule of sign. The various possibilities has been shown in Table \ref{tab1}.

		\begin{table*}[ht]
			\caption{Number of possible positive roots of polynomial $Eq.\eqref{e10}$ .  }\label{tab1}
			\begin{center}
				\begin{tabular}{l|llll|l}
					\hline    Cases & $b_2$ & $b_1$   & $b_0$ & ~$R_0$  & Total possible positive roots
					\\\hline
					~~~~$1$    &+ &-  &-    &$R_0>1$  & ~~~~~~~~~~~~~~~~~~~~~$1$\\
					~~~~$2$    &+ &+  &-    &$R_0>1$  & ~~~~~~~~~~~~~~~~~~~~~$1$\\
					~~~~$3$    &+ &-  &0    &$R_0=1$  & ~~~~~~~~~~~~~~~~~~~~~$1$\\
					~~~~$4$    &+ &+  &0    &$R_0=1$  & ~~~~~~~~~~~~~~~~~~~~~$0$\\
					~~~~$5$    &+ &+  &+    &$R_0<1$  & ~~~~~~~~~~~~~~~~~~~~~$0$\\
					~~~~$6$    &+ &-  &+    &$R_0<1$  & ~~~~~~~~~~~~~~~~~$0,1,2$\\
					\hline
				\end{tabular}
			\end{center}
		\end{table*}
		
		From the sixth case in Table \ref{tab1}, we know that the total number of positive roots of the polynomial \eqref{e10} depends on the sign of the discriminant $\Delta=b_1^2-4b_0b_2$ \cite{32}. From $\Delta=0$, we get
		$$R_0=1-\frac{b_1^2}{4b_2\mu k^2(\mu+k)(\mu+\gamma)}\triangleq R_c. $$
		Thus, we get the following lemma
		\begin{lemma}
			\begin{eqnarray*}
				\Delta>0\Leftrightarrow R_c<R_0,\\ \Delta=0\Leftrightarrow R_c=R_0,\\ \Delta<0\Leftrightarrow R_c>R_0.
			\end{eqnarray*}
		\end{lemma}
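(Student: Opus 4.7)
The plan is to carry out a direct algebraic manipulation, pivoting on the explicit form of $b_0$ and the definition of $R_c$. First I would substitute $b_0 = \mu k^2(\mu+k)(\mu+\gamma)(1-R_0)$ into the discriminant, obtaining
\[
\Delta = b_1^2 - 4 b_2 \mu k^2(\mu+k)(\mu+\gamma)(1-R_0).
\]

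Next I would observe that in the regime of interest (Case $6$ of Table \ref{tab1}, where the question of a double root arises), $b_2 > 0$, and the factor $4\mu k^2(\mu+k)(\mu+\gamma)$ is manifestly positive since every parameter is positive. Hence the sign of $\Delta$ coincides with the sign of
\[
\frac{b_1^2}{4 b_2 \mu k^2 (\mu+k)(\mu+\gamma)} - (1-R_0),
\]
and by the definition $R_c = 1 - \dfrac{b_1^2}{4 b_2 \mu k^2 (\mu+k)(\mu+\gamma)}$ this quantity equals $R_0 - R_c$. Therefore $\Delta > 0 \Leftrightarrow R_0 > R_c$, and the other two equivalences follow by replacing $>$ with $=$ and $<$ respectively.

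There is no genuine obstacle in this lemma; the only thing to be careful about is to explicitly record the hypothesis $b_2 > 0$ (so that dividing by $4 b_2 \mu k^2(\mu+k)(\mu+\gamma)$ preserves the direction of the inequality) and to note that the positivity of $b_2$ is immediate from its factored form as a product of sums of positive terms. The remainder is a one-line rearrangement.
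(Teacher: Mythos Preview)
Your proposal is correct and matches the paper's approach: the paper does not give an explicit proof of this lemma but simply defines $R_c$ as the value of $R_0$ obtained from setting $\Delta=0$, and your argument makes precisely this implicit reasoning explicit by dividing through by the positive quantity $4b_2\mu k^2(\mu+k)(\mu+\gamma)$. Your care in recording $b_2>0$ is appropriate and is exactly the point on which the equivalence rests.
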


		To summarize, we have the following results on the existence of equilibria of \eqref{e4}.
		
		\begin{theorem}\label{th4.2}
			The system \eqref{e4}
			\begin{itemize}
				\item [{\rm (1)}] always admits a disease-free equilibrium $Q_0=(N^*,0,0)$,
				\item [{\rm (2)}] has a  unique endemic equilibrium $Q_+^*=(S_+^*,E_+^*,I_+^*)$ when $R_0>1$ and case 1 or 2 is satisfied,
				\item [{\rm (3)}]  has a  unique endemic equilibrium $Q_+^*=(S_+^*,E_+^*,I_+^*)$ when $R_0=1$ and case 3 is satisfied,
				\item [{\rm (4)}]  does not have any endemic equilibrium when $R_0=1$ and case 4 is satisfied,
				\item [{\rm (5)}] does not have any endemic equilibrium when $R_0<1$ and case 5 is satisfied,
				\item [{\rm (6)}] has one or more than one endemic equilibria when $R_0<1$ and case 6 is satisfied,
				\begin{itemize}
					\item  [{\rm (i)}] does not have any endemic equilibrium when $1>R_c>R_0$,
					\item  [{\rm (ii)}]	 has a unique endemic equilibrium $Q_*=(S_*,E_*,I_*)=(\frac{\mu kN^*}{\mu k+[\beta_1 (\mu+\gamma)+\beta_2 k]I_*},\frac{\mu+\gamma}{k}I_*,I_*)$ where $ I_*=\frac{-b_1}{2b_2} $ when $R_0=R_c$,
					\item  [{\rm (iii)}]has two endemic equilibria $Q_+^*=(S_+^*,E_+^*,I_+^*)$ and $Q_-^*=(S_-^*,E_-^*,I_-^*)$ when $R_C<R_0<1$.
				\end{itemize}
				\noindent Here,
				$Q_\pm^*=(S_\pm^*,E_\pm^*,I_\pm^*)=\left(\frac{\mu kN^*}{\mu k+[\beta_1 (\mu+\gamma)+\beta_2 k]I_\pm^*},\frac{\mu+\gamma}{k}I_\pm^*,\frac{-b_1\pm\sqrt{\Delta}}{2b_2}\right).$
			\end{itemize}
		\end{theorem}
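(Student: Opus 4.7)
The plan is to exploit the reduction already performed in the excerpt: any endemic equilibrium is in one-to-one correspondence with a positive root $I^{*}>0$ of the quadratic \eqref{e10}, since the third equation of \eqref{e9} determines $E^{*}=\frac{\mu+\gamma}{k}I^{*}>0$ and the first equation then forces $S^{*}=\frac{\mu k N^{*}}{\mu k+[\beta_{1}(\mu+\gamma)+\beta_{2}k]I^{*}}>0$ automatically once $I^{*}>0$. So the whole theorem reduces to counting positive real roots of $b_{2}(I^{*})^{2}+b_{1}I^{*}+b_{0}=0$ under the various sign patterns of $(b_{2},b_{1},b_{0})$, plus a quick direct check that $Q_{0}=(N^{*},0,0)$ satisfies \eqref{e9}.

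Next I would observe that $b_{2}>0$ is a free gift from positivity of all parameters, and that the sign of $b_{0}=\mu k^{2}(\mu+k)(\mu+\gamma)(1-R_{0})$ is completely determined by the position of $R_{0}$ relative to $1$. Consequently the case table is just the Descartes sign-change count, combined with an elementary observation: when $b_{0}<0$ (i.e.\ $R_{0}>1$) the quadratic $p(I):=b_{2}I^{2}+b_{1}I+b_{0}$ has $p(0)<0$ and $p(+\infty)=+\infty$, so it admits exactly one positive root regardless of the sign of $b_{1}$; this dispatches parts (2) of the theorem (cases 1 and 2). When $b_{0}=0$ (i.e.\ $R_{0}=1$) the nontrivial root is $I^{*}=-b_{1}/b_{2}$, which is positive precisely when $b_{1}<0$ (case 3) and nonpositive when $b_{1}\ge 0$ (case 4), giving parts (3) and (4). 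When $b_{0}>0$ and $b_{1}\ge 0$ (case 5) both coefficients and the leading coefficient are positive, so no positive real root exists, giving part (5).

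The interesting case is (6), where $b_{2}>0$, $b_{1}<0$, $b_{0}>0$; here Descartes' rule permits $0$ or $2$ positive real roots, and the actual count is decided by the discriminant $\Delta=b_{1}^{2}-4b_{0}b_{2}$. I would invoke the lemma proved just before the theorem, which translates $\Delta>0$, $\Delta=0$, $\Delta<0$ into $R_{0}>R_{c}$, $R_{0}=R_{c}$, $R_{0}<R_{c}$, respectively. Since the vertex of $p$ sits at $I=-b_{1}/(2b_{2})>0$, both real roots (when they exist) are automatically positive, so $\Delta<0$ gives no endemic equilibrium (subcase i), $\Delta=0$ gives the unique double root $I_{*}=-b_{1}/(2b_{2})$ (subcase ii), and $\Delta>0$ gives the two roots $I_{\pm}^{*}=(-b_{1}\pm\sqrt{\Delta})/(2b_{2})$ (subcase iii), matching the stated formulas.

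The only genuine obstacle I foresee is bookkeeping: verifying that the stated classification is exhaustive and that the signs of $b_{1}$ actually realize the cases claimed (in particular, one should note that since $b_{1}$ contains both positive terms proportional to $\mu$ and negative terms proportional to $N^{*}$, all of the listed sign configurations are attainable for admissible parameters; a brief remark suffices). A secondary point worth checking, though not required by the theorem as phrased, is biological admissibility $N^{*}-S^{*}-E^{*}-I^{*}\ge 0$ for the recovered compartment; this follows by rearranging the second equation of \eqref{e9} at equilibrium, and I would include at most a one-line comment on it. Everything else is a mechanical application of Descartes' rule together with the discriminant dichotomy.
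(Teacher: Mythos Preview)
Your proposal is correct and follows essentially the same route as the paper: reduce to the quadratic \eqref{e10} in $I^{*}$, read off the sign of $b_{0}$ from $R_{0}$, apply Descartes' rule (the paper's Table~\ref{tab1}) for cases 1--5, and in case~6 invoke the discriminant together with the lemma linking $\mathrm{sign}(\Delta)$ to $R_{0}$ versus $R_{c}$. The paper does not present a separate proof block for this theorem; the argument is exactly the derivation and table immediately preceding the statement, so your write-up is, if anything, slightly more explicit (e.g.\ your $p(0)<0$, $p(+\infty)=+\infty$ observation for $R_{0}>1$, and the vertex location $-b_{1}/(2b_{2})>0$ guaranteeing both roots are positive in case~6) than what appears in the text.
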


		\subsection{Stability analysis of  the disease-free equilibrium }
		
		In order to study the local asymptotic stability of $Q_0$, we calculate  the Jacobian matrix of the system at $Q_0$. We then obtain
		$$J_{Q_0}=\left[
		\begin{array}{cccc}
			-\mu     &-\beta_1 N^*   &-\beta_2 N^*\\
			0        &\beta_1 N^*-(\mu+k)  &\beta_2 N^*\\
			0    &k   &-(\mu+\gamma)  \\
		\end{array}
		\right].$$
		Thus, the characteristic equation of the matrix $J_{Q_0}$ is given by
		\begin{equation}\label{e8}
			(\lambda+\mu)(\lambda^2+a_1 \lambda+a_0)=0,
		\end{equation}
		where $a_1=2\mu+\gamma+k-\beta_1 N^*$ and $a_0=(\mu+\gamma)(\mu+k-\beta_1 N^*)-\beta_2 kN^*=(\mu+k)(\mu+\gamma)(1-R_0).$
		
		All roots of  Eq.\eqref{e8} have negative real parts only when $a_0>0$ and $a_1>0$. It can be noted that $a_0>0$ if and only if $R_0<1$.  When $\beta_1<\frac{2\mu+\gamma+k}{N^*}$, $a_1>0$.
Therefore, all the eigenvalues of Jacobian $J_{Q_0}$ have negative real parts if $R_0<1$ and $\beta_1<\frac{2\mu+\gamma+k}{N^*}.$
		The results discussed above can be explained by the following theorem.
		
		\begin{theorem}\label{th4.1}
			The disease-free equilibrium $Q_0$ of the system \eqref{e4} is locally asymptotically stable only when $R_0<1$ and $\beta_1<\frac{2\mu+\gamma+k}{N^*};$ otherwise, it is unstable.
		\end{theorem}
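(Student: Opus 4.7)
The plan is to run a direct linearization argument at $Q_0$, using the Jacobian $J_{Q_0}$ already displayed in the excerpt, and then to invoke the Routh--Hurwitz criterion on the characteristic polynomial. Since the $S$-column of $J_{Q_0}$ has zeros below the diagonal, the eigenvalue $\lambda=-\mu$ peels off immediately, and the stability question reduces to sign conditions on the $2\times 2$ block that couples the $E$ and $I$ components.

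I would first expand $\det(\lambda I - J_{Q_0})$ along the first column to obtain the factorization
\[
(\lambda+\mu)\bigl(\lambda^{2}+a_{1}\lambda+a_{0}\bigr)=0,
\]
with $a_{1}=2\mu+\gamma+k-\beta_{1}N^{*}$ and $a_{0}=(\mu+\gamma)(\mu+k-\beta_{1}N^{*})-\beta_{2}kN^{*}$. The second step is an algebraic rewrite of $a_{0}$ using the definition of $R_{0}$, giving $a_{0}=(\mu+k)(\mu+\gamma)(1-R_{0})$, as already indicated in the excerpt. At this point the problem is reduced to: both roots of the quadratic factor have strictly negative real part iff $a_{0}>0$ and $a_{1}>0$, which is exactly the Routh--Hurwitz criterion for a degree-two polynomial with real coefficients.

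I would then translate the two positivity conditions into the statement of the theorem: $a_{0}>0 \Longleftrightarrow R_{0}<1$ and $a_{1}>0 \Longleftrightarrow \beta_{1}<\frac{2\mu+\gamma+k}{N^{*}}$. Combined with the guaranteed negative eigenvalue $-\mu$, this yields local asymptotic stability of $Q_{0}$ under the hypothesized conditions. For the converse (the ``otherwise, unstable'' half of the claim) I would argue by cases on the quadratic: if $R_{0}>1$ then $a_{0}<0$ and the product of the two roots is negative, so one root is real and positive; if instead $R_{0}<1$ but $\beta_{1}\ge \frac{2\mu+\gamma+k}{N^{*}}$, then $a_{0}>0$ while $a_{1}\le 0$, so the sum of the two roots is non-negative and at least one root has non-negative real part. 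In either sub-case $Q_{0}$ fails to be locally asymptotically stable.

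I do not anticipate a serious obstacle: the argument is essentially a calculation plus Routh--Hurwitz. The only mildly delicate point is the ``only when'' direction, which requires the case analysis above rather than a single one-line bound, and a careful treatment of the borderline cases $a_{0}=0$ or $a_{1}=0$ (where a zero or purely imaginary eigenvalue appears and linearization fails to give asymptotic stability). These boundary cases are consistent with the strict inequalities in the statement and need only a brief remark.
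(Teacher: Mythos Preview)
Your proposal is correct and follows essentially the same approach as the paper: linearize at $Q_0$, factor off the eigenvalue $-\mu$, and apply the Routh--Hurwitz conditions $a_0>0$, $a_1>0$ to the remaining quadratic, then translate these into the conditions on $R_0$ and $\beta_1$. In fact your treatment of the ``otherwise, unstable'' direction is more careful than the paper's, which simply asserts instability without a case analysis; one small remark is that your second sub-case ($R_0<1$ but $\beta_1\ge \frac{2\mu+\gamma+k}{N^*}$) is actually vacuous, since $R_0<1$ already forces $\beta_1 N^*<\mu+k<2\mu+\gamma+k$, so the $\beta_1$ condition in the theorem is redundant.
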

		
		\subsection{Stability analysis of endemic equilibrium}
		
		\subsubsection{  Local asymptotic stability}
		
		To study the local asymptotic stability of the endemic equilibrium $Q^*$, we compute the following Jacobian matrix at $Q^*$, which is given by
		\begin{equation*}
			J_{Q^*}=\left[
			\begin{array}{cccc}
				J_{11}     &-\beta_1 S^*   &-\beta_2 S^*\\
				J_{21}      &J_{22}  &J_{23}\\
				0    &k   &-(\mu+\gamma)  \\
			\end{array}
			\right],
		\end{equation*}
		where $$J_{11}=-\mu-\beta_1 E^*-\beta_2 I^*,$$ $$J_{21}=(\beta_1-\alpha_1) E^*+(\beta_2-\alpha_2)I^*,$$ $$J_{22}=-[\beta_2S^*+\alpha_2(N^*-S^*-I^*)]\frac{I^*}{E^*}-\alpha_1E^*,$$
		$$J_{23}=\alpha_2N^*+(\beta_2-\alpha_2) S^*-(\alpha_1+\alpha_2)E^*-2\alpha_2 I^*.$$
		The characteristic equation of $J_{Q^*}$ is
		\[
		\lambda^3+c_1 \lambda^2+c_2 \lambda+c_3=0,
		\]
		where
		\begin{eqnarray*}
			c_1&=&2\mu+\gamma+(\beta_1+\alpha_1)E^*+\beta_2I^*\\
			&&+[\beta_2S^*+\alpha_2(N^*-S^*-I^*)]\frac{I^*}{E^*}>0, \\
			c_{2}&=&(\mu+\gamma)(\mu+\beta_1E^*+\beta_2I^*)\\
			&&+(2\mu+\gamma+\beta_1E^*+\beta_2I^*)\{\alpha_1E^*\\
			&&+[\beta_2S^*+\alpha_2(N^*-S^*-I^*)]\frac{I^*}{E^*}\}\\
			&&+k[(\alpha_1+\alpha_2)E^*+2\alpha_2I^*\\
			&&+(\alpha_2-\beta_2)S^*-\alpha_2N^*]\\
			&&+\beta_1S^*[(\beta_1-\alpha_1)E^*+(\beta_2-\alpha_2)I^*], \\
			c_3&=&\beta_1S^*(\mu+\gamma)[(\beta_1-\alpha_1)E^*+(\beta_2-\alpha_2)I^*]\\
			&&+k\beta_2S^*[(\beta_1-\alpha_1)E^*+(\beta_2-\alpha_2)I^*]\\
			&&+k(\mu+\beta_1E^*+\beta_2I^*)[(\alpha_1+\alpha_2)E^*\\
			&&+2\alpha_2I^*+(\alpha_2-\beta_2)S^*-\alpha_2N^*]\\
			&&+(\mu+\gamma)(\mu+\beta_1E^*+\beta_2I^*)\{\alpha_1E^*\\
			&&+[\beta_2S^*+\alpha_2(N^*-S^*-I^*)]\frac{I^*}{E^*}\}.
		\end{eqnarray*}
When all eigenvalues of $J_{Q^*}$  have negative real parts, the endemic equilibrium point $Q^*$ is locally asymptotically stable.
Therefore, using the well-known Routh-Hurwitz criteria, we obtain a set of parametric conditions for local asymptotic stability of $Q^*$,  given by $c_1>0, c_3>0$ and $c_1c_2-c_3>0.$
		The result can be summarized in the following theorem
		\begin{theorem}\label{th4.3}
			If $c_3>0$ and $c_1c_2-c_3>0,$
			the endemic equilibrium $Q^*$ of system \eqref{e4} is locally asymptotically stable.
		\end{theorem}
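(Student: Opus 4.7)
The plan is to apply the Routh--Hurwitz stability criterion to the characteristic polynomial of $J_{Q^*}$, since local asymptotic stability of $Q^*$ reduces to verifying that all three roots of
\[
\lambda^3+c_1\lambda^2+c_2\lambda+c_3=0
\]
have negative real parts. Recall that for a monic real cubic, the Routh--Hurwitz criterion states that all roots lie in the open left half-plane if and only if $c_1>0$, $c_3>0$, and $c_1c_2-c_3>0$.

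First, I would point out that the positivity of $c_1$ is already built into the derivation immediately preceding the theorem: each of the summands $2\mu+\gamma$, $(\beta_1+\alpha_1)E^*$, $\beta_2 I^*$, and $[\beta_2 S^*+\alpha_2(N^*-S^*-I^*)](I^*/E^*)$ is manifestly nonnegative, and the first one is strictly positive. Note that $N^*-S^*-I^*=R^*+E^*\ge E^*>0$ at an endemic equilibrium, which is exactly what is needed to make the $(I^*/E^*)$-term nonnegative; I would flag this fact explicitly so that $c_1>0$ is unconditional (it does not need to be assumed).

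Next, with the two remaining hypotheses $c_3>0$ and $c_1c_2-c_3>0$ imposed directly by the theorem, the Routh array
\[
\begin{array}{c|cc}
\lambda^3 & 1 & c_2 \\
\lambda^2 & c_1 & c_3 \\
\lambda^1 & (c_1c_2-c_3)/c_1 & 0 \\
\lambda^0 & c_3 & 0
\end{array}
\]
has a strictly positive first column, so by Routh--Hurwitz all three eigenvalues of $J_{Q^*}$ have negative real parts. Conclude that $Q^*$ is locally asymptotically stable by the standard linearization theorem (Hartman--Grobman) applied to the smooth vector field defining system (2.4).

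The proof is essentially a one-line invocation of Routh--Hurwitz, so there is no real obstacle; the only subtle step is justifying the sign of the $(I^*/E^*)$-term in $c_1$ at an endemic equilibrium, which follows from the conservation identity $N^*=S^*+E^*+I^*+R^*$ with $R^*\ge 0$. I would not attempt to simplify or sign-analyze the long expressions for $c_2$ and $c_3$, since the theorem is stated as a conditional result with $c_3>0$ and $c_1c_2-c_3>0$ assumed as hypotheses rather than derived.
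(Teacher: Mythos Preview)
Your proposal is correct and follows the same approach as the paper: both compute the characteristic cubic of $J_{Q^*}$, observe that $c_1>0$ holds automatically from the structure of the trace terms, and then invoke the Routh--Hurwitz criterion with the hypotheses $c_3>0$ and $c_1c_2-c_3>0$ to conclude local asymptotic stability. Your explicit justification that $N^*-S^*-I^*=E^*+R^*>0$ (hence the last summand in $c_1$ is nonnegative) and your display of the Routh array are details the paper leaves implicit, but the argument is otherwise identical.
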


		\subsubsection{Global asymptotic stability}

		In this section, we study the global asymptotic stability of the endemic equilibrium point $Q^*$ of system \eqref{e4}. It can be seen from Theorem \ref{th4.2} that the system \eqref{e4} may have multiple endemic equilibria  independent of $R_0<1$ or $R_0>1$. In addition, according to the previous study, it is found that a backward bifurcation occurs when $R_0<1$, which  shows that the local equilibrium is not globally asymptotically stable in this case. However, when $R_0>1$ (i.e., case (i) of Theorem \ref{th4.2}), it is necessary  to study the overall stability of the local equilibrium point. In order to study the global asymptotic stability of $Q^*$, we will use the geometric method developed by Li and Muldowney \cite{18}. Now, we will briefly summarize the method developed by Li and Muldowney \cite{18}.

		Let us consider the mapping $x\rightarrow f(x)$ defined on an open set $\Omega\subset\mathbb{R}^n\rightarrow\mathbb{R}^n$ such that each solution of the differential equation
		\begin{equation}\label{eq11}
			\frac{dx}{dt}=f(x)
		\end{equation}
		is uniquely determined by its initial value $x(0)=x_0$, and the solution can be denoted by $x(t,x_0)$. Further, the following assumptions hold
		\begin{itemize}
			\item ($ H1 $) $\Omega$ is simply connected,
			\item  ($ H2 $) there is a compact absorbing set $E\subset\Omega$,
			\item ($ H3 $) the differential equation has an unique endemic equilibrium $x^*$.
		\end{itemize}
		The Lozinskii measure for an $n\times n$ matrix $B$ with respect to induced matrix norm $|\cdot|$ is defined as
		\[
		\eta(B)=\lim_{h\rightarrow 0^+}\frac{|I+hB|-1}{h}.
		\]
		Let us consider the map $x\rightarrow P(x)$, where $P(x)$ is a nonsingular matrix-valued $C_1$ function on $\Omega$.
The matrix $B$ is defined as $B=P_fP^{-1}+PV^{[2]}P^{-1}$, where $P_f$ is obtained by replacing each entry $p_{ij}$ of $P$ by its derivative in the direction of $ f $ and $V^{[2]}$ is the second additive compound matrix corresponding to the variational matrix $V$ of the system \eqref{eq11}. For the Lozinskii measure $\eta$ on $\mathbb{R}^{C_2\times C_2}$, a quantity is defined as
		$$q=\limsup_{t\rightarrow\infty}\sup_{x_0\in E}\frac{1}{t}\int^T_0\eta(Bx(s,x_0))ds. $$
		The following result has been established in Theorem 3.5 of \cite{18}.
		
		\begin{theorem}\label{th4.4}
			If  system \eqref{eq11} satisfies the assumptions ($ H1 $), ($ H2 $) and ($ H3 $), then the unique equilibrium $x^*$ is globally asymptotically stable in $\Omega$ when $q<0$  for a function $P(x)$ and Lozinskii measure $\eta$.
		\end{theorem}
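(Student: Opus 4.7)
The plan is to follow Li and Muldowney's geometric approach by tracking how the flow of \eqref{eq11} transports infinitesimal two-dimensional area elements along trajectories, and to use the Lozinskii measure to control their growth through the quantity $q$.

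First, I would set up the second compound equation. Along a solution $x(t,x_0)$ the variational equation is $\dot z = V(x(t,x_0))z$, and the induced linear system on exterior $2$-forms, which measure oriented infinitesimal area, is $\dot y = V^{[2]}(x(t,x_0))y$. Applying the smooth change of coordinates $y = P(x(t,x_0))u$ converts this into $\dot u = B(x(t,x_0))u$, where $B = P_f P^{-1} + P V^{[2]} P^{-1}$ is exactly the matrix appearing in the statement. This identification is the algebraic heart of the method: it reduces a nonlinear question about areas to a linear non-autonomous problem.

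Second, I would invoke the standard Lozinskii estimate $|u(t)| \le |u(0)|\exp\bigl(\int_0^t \eta(B(x(s,x_0)))\,ds\bigr)$, valid for any matrix norm. Compactness of the absorbing set $E$ (assumption H2) makes the estimate uniform in $x_0$, and the hypothesis $q<0$ then forces $|u(t)|\to 0$ exponentially. In other words, every two-dimensional surface element carried by the flow contracts to zero area as $t\to\infty$.

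Third, this area-contraction property is precisely Muldowney's higher-dimensional generalization of the Bendixson criterion. Combined with the simple connectedness of $\Omega$ (H1), it excludes non-constant periodic orbits, homoclinic loops, and heteroclinic polygons of equilibria in $\Omega$. The $\omega$-limit set of any trajectory is then a non-empty compact invariant subset of $E$ whose only admissible structure, under the uniqueness assumption (H3), is $\{x^*\}$. Together with the fact that $x^*$ must itself be locally attracting (otherwise the excluded cyclic structures would reappear by the autonomous convergence theorem of \cite{18}), this yields global asymptotic stability of $x^*$ in $\Omega$.

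The main technical obstacle is the last step: passing from the analytic fact that areas shrink to the topological statement that no non-trivial compact invariant set other than $\{x^*\}$ survives in $\Omega$. Simple connectedness is essential here to rule out non-trivial classes in the exterior-power space, and compactness of the absorbing set is required to extract $\omega$-limit points and apply the Poincar\'e--Bendixson-type argument lifted to dimension $n$ via the second additive compound. Once these ingredients are assembled in the Li--Muldowney framework, the conclusion follows; the calculation we carry out in the specific model \eqref{e4} is only the first half, namely the construction of $P$ and the verification of $q<0$.
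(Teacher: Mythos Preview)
The paper does not prove Theorem~\ref{th4.4} at all: the sentence immediately preceding it reads ``The following result has been established in Theorem 3.5 of \cite{18},'' so the theorem is quoted verbatim from Li--Muldowney and used as a black box. Your proposal, by contrast, sketches the actual content of the Li--Muldowney argument --- the second compound equation, the Lozinskii growth bound, the higher-dimensional Bendixson criterion, and the autonomous convergence theorem --- which is more than the paper itself supplies. As a summary of how \cite{18} works your outline is accurate in its main lines; the only point to flag is that the passage from ``areas contract'' to ``$x^*$ is globally asymptotically stable'' is not a direct Poincar\'e--Bendixson argument but goes through the notion of \emph{robustness under $C^1$-local perturbations} introduced in \cite{18}, and the local stability of $x^*$ is deduced there, not assumed. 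For the purposes of this paper, however, simply citing Theorem~3.5 of \cite{18} is what the authors do, and that is sufficient.
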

		
		Now, we use Theorem \ref{th4.4} to investigate the global asymptotic stability of the infected equilibrium $Q^*$ for $R_0>1$. Before we start the proof, we claim that the system \eqref{e4} is uniformly persistent  by using the result demonstrated by Freedman et al.
		\begin{definition}
			The system \eqref{e4} is said to be uniformly persistent if there exists a constant $m>0$ such that any solution $(S(t),E(t),I(t))$ starting from $(S(0),E(0),I(0))\in \Gamma$ satisfies
			$$\min\{\liminf_{t\rightarrow\infty}S(t),\ \liminf_{t\rightarrow\infty}E(t),\ \liminf_{t\rightarrow\infty}I(t)\}\geq m.$$
		\end{definition}
		\begin{lemma}\label{le4.1}
			The system \eqref{e4} is uniformly persistent if and only if $R_0>1$.
		\end{lemma}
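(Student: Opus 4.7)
The plan is to establish both directions of the biconditional using classical uniform-persistence theory. For the forward direction I would invoke the acyclicity criterion of Freedman et al.\ (the reference alluded to immediately before the lemma), while the converse follows almost immediately from Theorem \ref{th4.1} by a contrapositive argument.

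For the ``if'' direction, start from the fact that Theorem \ref{th3.2} already gives a compact positively invariant and absorbing region $\Gamma=\{(S,E,I)\in\mathbb{R}^3_+:S+E+I\le N^*\}$, which supplies the compact absorbing set required by the persistence theorem. Next I would identify the omega-limit set of the flow restricted to the boundary of $\Gamma$: on $\{E=0\}$ the third equation forces $I(t)\to 0$ and then $S(t)\to N^*$, while on $\{I=0,\,E>0\}$ we have $\dot I=kE>0$, so the trajectory leaves that face immediately. Hence the only boundary invariant set is $\{Q_0\}$, giving a trivially acyclic covering. The threshold enters when showing that $Q_0$ is a uniform weak repeller once $R_0>1$: the next-generation-matrix construction of \cite{13} guarantees that the $(E,I)$-block $F-V$ of $J_{Q_0}$ has a positive-real dominant eigenvalue with a strictly positive eigenvector whenever $\rho(FV^{-1})=R_0>1$, so the stable manifold of $Q_0$ lies inside the invariant face $\{E=I=0\}$. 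Applying the Freedman-type theorem then delivers $m>0$ with $\liminf_{t\to\infty}E(t),\liminf_{t\to\infty}I(t)\ge m$; the matching lower bound on $S$ drops out of the first equation via the comparison $\dot S\ge \mu N^*-(\mu+\beta_1 N^*+\beta_2 N^*)S$ combined with the already-established upper bounds on $E$ and $I$.

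For the ``only if'' direction I would argue by contrapositive. If $R_0<1$, then the explicit formula $R_0=\beta_1 N^*/(\mu+k)+\beta_2 kN^*/[(\mu+k)(\mu+\gamma)]$ already forces $\beta_1 N^*<\mu+k<2\mu+\gamma+k$, so the secondary hypothesis of Theorem \ref{th4.1} is automatically met and $Q_0$ is locally asymptotically stable. Any interior trajectory starting in a small neighborhood of $Q_0$ therefore converges to $Q_0$, so $\liminf E(t)=\liminf I(t)=0$, contradicting uniform persistence. The borderline case $R_0=1$ can be settled by a brief center-manifold remark: the neutral eigendirection at $Q_0$ is tangent to the invariant face $\{E=I=0\}$, so nearby interior solutions still relax back to $Q_0$.

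The main obstacle is the weak-repeller property of $Q_0$, i.e.\ ensuring that no interior trajectory has $Q_0$ in its omega-limit set. This is where the positive eigenvector $v>0$ of $F-V$ is essential: pairing $v$ against the $(E,I)$-coordinates produces a linear functional that grows exponentially near $Q_0$ at a rate close to the unstable eigenvalue, which rules out attraction from the interior. Everything else — boundedness, boundary dynamics along the axes, the Routh-Hurwitz sign check at $Q_0$, and the automatic inequality $\beta_1 N^*<2\mu+\gamma+k$ — is routine, so I expect the whole argument to compress into a short paragraph once the weak-repeller step is stated cleanly.
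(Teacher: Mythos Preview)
Your argument for $R_0\neq 1$ is essentially the paper's own: both invoke Theorem~4.3 of Freedman--Ruan--Tang \cite{19}, identify $\{Q_0\}$ as the maximal invariant set on $\partial\Gamma$, and reduce uniform persistence to the instability of $Q_0$. The paper states this in three sentences without verifying the hypotheses; you actually sketch why the boundary flow collapses to $Q_0$, why $Q_0$ repels interior orbits via the positive Perron eigenvector of $F-V$, and you observe (correctly, and the paper does not) that $R_0<1$ already forces $\beta_1 N^*<\mu+k<2\mu+\gamma+k$, so the extra hypothesis in Theorem~\ref{th4.1} is automatic.

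There is one genuine slip, in your treatment of the borderline $R_0=1$. You claim the neutral eigendirection at $Q_0$ is tangent to the face $\{E=I=0\}$; it is not. The zero eigenvalue comes from the $(E,I)$ block of $J_{Q_0}$, and its right eigenvector is exactly the vector $w=\bigl(-\tfrac{(\mu+k)(\mu+\gamma)}{\mu k},\tfrac{\mu+\gamma}{k},1\bigr)^T$ computed in Section~\ref{sec5}, which has strictly positive $E$- and $I$-components. The center manifold is therefore transverse to the disease-free face, and whether nearby interior orbits relax to $Q_0$ is governed by the sign of the coefficient $a$ in the Castillo-Chavez--Song normal form: in the backward-bifurcation regime ($a>0$) the reduced dynamics $\dot y\approx a y^2$ push positive $y$ away from zero, so your conclusion does not follow. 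The paper simply does not address $R_0=1$; if you want the biconditional to be airtight there, you will need a separate argument (or to restrict the ``only if'' direction to $R_0<1$).
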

		
		The infection-free equilibrium point $Q_0$ is not locally asymptotic stable when $R_0>1$, which serves the necessity condition $R_0>1$. To prove that $R_0>1$ is sufficient for uniform persistent, we shall follow the approach described by Freedman in \cite{19}. To confirm that system \eqref{e4} satisfies all the conditions of Theorem 4.3 in \cite{19}, we consider $X=\mathbb{R}^3$ and $E=\Gamma$. The maximal invariant set $N$ on the boundary $\partial\Gamma$ is the disease-free equilibrium $Q_0$, which is isolated. Therefore, we may conclude from Theorem 4.3 in \cite{19} that the uniform persistence of \eqref{e4} when $R_0>1$ is equivalent to the instability of $Q_0$.
		
		Based on the above discussion, we establish the following theorem.
	
		\begin{theorem}
			The  unique endemic equilibrium $Q^*$ is globally asymptotically stable for $R_0>1$.
		\end{theorem}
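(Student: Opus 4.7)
The plan is to apply the Li--Muldowney geometric method (Theorem \ref{th4.4}) to the reduced three--dimensional system \eqref{e4}. First I would verify the three hypotheses. Hypothesis (H3) is immediate from Theorem \ref{th4.2}(2): when $R_0>1$ the endemic equilibrium $Q^*$ is unique (cases $1$ and $2$ in Table \ref{tab1} yield exactly one positive root of \eqref{e10}). For (H1), I would take the biologically feasible region
\[
\Omega=\{(S,E,I)\in\mathbb{R}^3_+ : S+E+I\leq N^*\},
\]
which is convex, hence simply connected; Theorem \ref{th3.1} and Theorem \ref{th3.2} show it is positively invariant. For (H2), I would combine the absorbing property of $\Omega$ with the uniform persistence established in Lemma \ref{le4.1}: the set
\[
K=\{(S,E,I)\in\Omega : S,E,I\geq m\}
\]
is compact and absorbing for $R_0>1$, so the required compact absorbing set $E\subset\Omega$ exists.

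Next I would compute the second additive compound matrix of the Jacobian $J(S,E,I)$ of \eqref{e4}. Writing
\[
J=\begin{pmatrix} J_{11} & -\beta_1 S & -\beta_2 S\\ J_{21} & J_{22} & J_{23}\\ 0 & k & -(\mu+\gamma)\end{pmatrix}
\]
with the entries as defined before Theorem \ref{th4.3}, the second additive compound $J^{[2]}$ is a $3\times 3$ matrix with diagonal $J_{11}+J_{22},\ J_{11}-(\mu+\gamma),\ J_{22}-(\mu+\gamma)$ and off-diagonal entries read off in the standard way. Motivated by the $(E,I)$ scaling that works for SEIR--type models, I would choose the nonsingular matrix-valued function
\[
P(S,E,I)=\mathrm{diag}\!\left(1,\ \tfrac{E}{I},\ \tfrac{E}{I}\right),
\]
so that $P_f P^{-1}=\mathrm{diag}\!\left(0,\ \dot E/E-\dot I/I,\ \dot E/E-\dot I/I\right)$ along solutions. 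Then form $B=P_f P^{-1}+P J^{[2]}P^{-1}$ and partition it as a $2\times 2$ block matrix $B=\begin{pmatrix}B_{11}&B_{12}\\ B_{21}&B_{22}\end{pmatrix}$ with $B_{11}$ a scalar, $B_{22}$ a $2\times 2$ block.

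Using the $\ell_\infty$ vector norm on $\mathbb{R}^3$ and the associated Lozinski\u{\i} measure $\eta_\infty$, I would apply the block estimate $\eta_\infty(B)\leq \max\{g_1,g_2\}$, where
\[
g_1=\eta_\infty(B_{11})+\lVert B_{12}\rVert_\infty,\qquad g_2=\eta_\infty(B_{22})+\lVert B_{21}\rVert_\infty.
\]
The key identities are the equilibrium--free relations obtained from \eqref{e4}: from the third equation $\dot I/I=kE/I-(\mu+\gamma)$, and from the second equation one can write the nonlinear terms in $\dot E/E$ explicitly. After substitution, both $g_1$ and $g_2$ should reduce to expressions of the form $\dot E/E-\mu-(\text{nonnegative terms})$, so that
\[
\eta_\infty(B)\leq \frac{\dot E}{E}-\mu.
\]
Integrating along an orbit starting in the compact absorbing set $K$ and using $\limsup_{t\to\infty}\tfrac{1}{t}\log E(t)=0$ (because $E$ is bounded above and bounded away from zero by persistence), I would obtain
\[
q=\limsup_{t\to\infty}\sup_{x_0\in K}\frac{1}{t}\int_0^t\eta_\infty(B)\,ds\leq -\mu<0.
\]
Theorem \ref{th4.4} then delivers global asymptotic stability of $Q^*$ in $\Omega$.

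The main obstacle will be the Lozinski\u{\i} estimate itself: the reinfection terms $\alpha_1RE$ and $\alpha_2RI$ (with $R=N^*-S-E-I$) enter $J_{22}$ and $J_{23}$ with mixed signs, so a naive bookkeeping does not immediately dominate the off-diagonal contributions of $B_{12}$ and $B_{21}$. I expect to need to exploit the equilibrium relation $kE^*=(\mu+\gamma)I^*$ and the sign structure $\beta_i\geq\alpha_i$ (reinfection weaker than primary infection) implicitly assumed for a well-posed re-infection model, together with the positivity $N^*-S-E-I\geq 0$ on $\Omega$, to absorb the troublesome cross terms into $\dot E/E$. Should the diagonal choice of $P$ fail to give $q<0$ for the full parameter range, the natural backup is the more general $P=\mathrm{diag}(E/I,1,1)$ or a related scaling used by Li--Muldowney for SEIR models, and then repeat the block estimate; the structural reason global stability should hold is that for $R_0>1$ Table \ref{tab1} rules out multiple endemic equilibria, so no competing attractor exists on $\Omega$.
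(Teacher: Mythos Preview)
Your plan mirrors the paper's argument almost exactly: Li--Muldowney geometric method, verification of (H1)--(H3) via convexity of $\Omega$ and Lemma~\ref{le4.1}, the same diagonal scaling $P=\mathrm{diag}(1,E/I,E/I)$, and the block decomposition $B=\bigl(\begin{smallmatrix}B_{11}&B_{12}\\B_{21}&B_{22}\end{smallmatrix}\bigr)$ with the two-term estimate $\eta(B)\le\max\{g_1,g_2\}$. The one structural difference is the norm: the paper does \emph{not} use the $\ell_\infty$ norm throughout but rather the mixed norm $|(u,v,w)|=\max\{|u|,\,|v|+|w|\}$, so that $\eta_1(B_{22})$ is computed with the $\ell_1$ Lozinski\u{\i} measure (column sums) on the $2\times 2$ block. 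With pure $\ell_\infty$ the off-diagonal entry $-\beta_1 S$ in $B_{22}$ enters $g_2$ with an unhelpful sign; the $\ell_1$ choice lets the column containing $k$ absorb $|B_{21}|$ via the identity $\dot I/I=kE/I-(\mu+\gamma)$, which is the standard trick for SEIR-type compounds. You should switch to this mixed norm.

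The more serious gap is your expectation that ``both $g_1$ and $g_2$ should reduce to expressions of the form $\dot E/E-\mu-(\text{nonnegative terms})$''. The paper's own computation does \emph{not} achieve this. After substituting $\dot E/E$ and $\dot I/I$ it obtains only
\[
\eta(B)\le \frac{\dot E}{E}-(\mu-\theta),
\]
where $\theta$ is a maximum of two expressions built from the reinfection terms $\alpha_1(N^*-S-E-I)$, $\alpha_2(N^*-S-I)$, $\beta_2(S-I)$, etc., and these are not manifestly nonpositive on $\Omega$. The paper then concludes $q\le 0$ only under the extra hypothesis $\mu>\theta$, which is nowhere stated in the theorem. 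So the obstacle you flagged (``the reinfection terms enter with mixed signs'') is real, and neither your hoped-for assumption $\beta_i\ge\alpha_i$ nor the positivity $N^*-S-E-I\ge 0$ suffices to make $\theta\le 0$ in general. Your plan is therefore as complete as the paper's proof---which is to say, both leave the same residual condition unresolved---but you should not expect the clean bound $\eta(B)\le \dot E/E-\mu$; write out the $g_1,g_2$ computations explicitly and isolate the leftover $\theta$ as the paper does.
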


		\begin{proof}
			
			System \eqref{e4} is uniformly persistent in the interior of simply connected domain $\Gamma$ when $R_0>1$. Therefore, there exits a compact absorbing set $E\subset int\Gamma$. Hence,  system \eqref{e4} satisfies the assumption ($ H2 $). Also, from the first case of Theorem \ref{th4.2},  we get the condition for the existence of a unique endemic equilibrium when $R_0>1$. Therefore, the assumption ($ H3 $) is also satisfied.

			The variational matrix $V(S,E,I)$ corresponding to the system \eqref{e4} is
			
			\begin{equation}\label{eq12}
				V=\left[
				\begin{array}{cccc}
					-\mu-\beta_1 E-\beta_2 I     &-\beta_1 S   &-\beta_2 S\\
					v_{21}     &v_{22}  &v_{23}\\
					0    &k   &-(\mu+\gamma)
				\end{array}
				\right],
			\end{equation}
			where
			\begin{eqnarray*}
				v_{21}&=&(\beta_1-\alpha_1)E+(\beta_2-\alpha_2) I,\\
				v_{22}&=&\alpha_1N^*+(\beta_1-\alpha_1)S-2\alpha_1 E\\
				&&-(\alpha_1+\alpha_2)I-(\mu+k),\\
				v_{23}&=&\alpha_2N^*+(\beta_2-\alpha_2)S\\
				&&-(\alpha_1+\alpha_2)E-2\alpha_2 I.
			\end{eqnarray*}
			The associated second additive compound matrix is
			\begin{equation}\label{eq13}
				V^{[2]}=\left[
				\begin{array}{cccc}
					v_1     &v_2   &\beta_2 S\\
					k        &-\beta_1 E-\beta_2 I-(2\mu+\gamma)  &-\beta_1 S\\
					0    &(\beta_1-\alpha_1)E+(\beta_2-\alpha_2)I   &v_3
				\end{array}
				\right],
			\end{equation}
			where
			\begin{eqnarray*}
				v_1&=&\alpha_1N^*+(\beta_1-\alpha_1)S-(2\alpha_1+\beta_1)E\\
				&&-(\alpha_1+\alpha_2+\beta_2)I-(2\mu+k),\\
				v_2&=&\alpha_2N^*+(\beta_2-\alpha_2)S\\
				&&-(\alpha_1+\alpha_2)E-2\alpha_1I,\\
				v_3&=&\alpha_1N^*+(\beta_1-\alpha_1)S-2\alpha_1E\\
				&&-(\alpha_1+\alpha_2)I-(2\mu+k+\gamma).
			\end{eqnarray*}
			Let us assume that the function $x\rightarrow P(x)$ as $P(S,E,I)=diag(1,\frac{E}{I},\frac{E}{I})$. Therefore, we have
			\begin{eqnarray*}
				&&P^{-1}(S,E,I)=diag(1,\frac{I}{E},\frac{I}{E}), \\
				&&P_f=diag(0,\frac{\dot{E}}{I}-\frac{E}{I^2}\dot{I},\frac{\dot{E}}{I}-\frac{E}{I^2}\dot{I}),\\
				&&P_fP^{-1}=diag(0,\frac{\dot{E}}{E}-\frac{\dot{I}}{I},\frac{\dot{E}}{E}-\frac{\dot{I}}{I}),
			\end{eqnarray*}
			and \begin{eqnarray*}
				B&=&P_fP^{-1}+PV^{[2]}P^{-1}\\
				&=&P_fP^{-1}+V^{[2]}\\
				&=&\left[
				\begin{array}{cccc}
					B_{11}     & B_{12}\\
					B_{21}       &B_{22}\\
				\end{array}
				\right],
			\end{eqnarray*}
			where
			\begin{eqnarray*}
				B_{11}&=&[\alpha_1N^*+(\beta_1-\alpha_1)S-(2\alpha_1+\beta_1)E\\
				&&-(\alpha_1+\alpha_2+\beta_2)I-(2\mu+k)], \\
				B_{12}&=&[\alpha_2N^*+(\beta_2-\alpha_2)S\\
				&&-(\alpha_1+\alpha_2)E-2\alpha_2I, \beta_2 S], \\
				B_{21}&=&[k,0]^T,\\
				B_{22}&=&\left[
				\begin{array}{cccc}
					c_1  & -\beta_1 S\\
					(\beta_1-\alpha_1)E+(\beta_2-\alpha_2) I       &c_2\\
				\end{array}
				\right].
			\end{eqnarray*}
			Here, $c_1=\frac{\dot{E}}{E}-\frac{\dot{I}}{I}-\beta_1 E-\beta_2 I-(2\mu+\gamma)$ and $ c_2=\frac{\dot{E}}{E}-\frac{\dot{I}}{I}+\alpha_1N^*+(\beta_1-\alpha_1)S-2\alpha_1E-(\alpha_1+\alpha_2)I-(2\mu+k+\gamma). $
			
			Now, we consider the norm on $\mathbb{R}^3$, obtained as  $$|(u,v,w)|=\max\{|u|, |v|+|w|\},  \forall  (u,v w)\in\mathbb{R}^3. $$
			And, the Lozinskii measure is defined as $$\eta(B)\leq\max\{g_1,g_2\}$$ with $$g_1=\eta_1(B_{11})+|B_{12}|$$ and $$g_2=\eta_1(B_{22})+|B_{21}|,$$ where $\eta_1$ is the Lozinskii measure of matrix with respect to the $L_1$ norm, and $|B_{12}|$ and $|B_{21}|$ are matrix norms with respect to $L_1$ vector norm. Therefore, we obtain
			\begin{equation}\label{eq14}
				\begin
				{array}{l l l l}
				|B_{12}|&=&\beta_2S+\max\{\alpha_2N^*-\alpha_2S\\
				&&-(\alpha_1+\alpha_2)E-2\alpha_2I, 0\}, \\
				|B_{21}|&=&\max\{k,0\}=k, \\
				\eta_1(B_{11})&=&\alpha_1N^*+(\beta_1-\alpha_1)S-(2\alpha_1+\beta_1)E\\
				&&-(\alpha_1+\alpha_2+\beta_2)I-(2\mu+k), \\
				\eta_1(B_{22})&=&\frac{\dot{E}}{E}-\frac{\dot{I}}{I}-\alpha_1 E-\alpha_2I-(2\mu+\gamma)\\
				&&+\max\{0, \alpha_1(N^*-S-E-I)-k\}.
				\end {array}
			\end{equation}
			Now, from the third equation of system \eqref{e4} , we obtain
			\begin{equation}\label{eq15}
				\frac{\dot{I}}{I}=k\frac{E}{I}-(\mu+\gamma).
			\end{equation}
			Therefore, from \eqref{eq14} and \eqref{eq15},  we obtain
			\begin{equation}\label{e16}
				\begin
				{array}{l l l l}
				\eta_1(B_{22})&=&\frac{\dot{E}}{E}-k\frac{E}{I}-\mu-\alpha_1E-\alpha_2I\\
				&&+\max\{0, \alpha_1(N^*-S-E-I)-k\}.
				\end {array}
			\end{equation}
			Hence, using the relations \eqref{e16} and \eqref{eq14} , we get
			\begin{equation*}
				\begin
				{array}{l l l l}
				g_2&=&\frac{\dot{E}}{E}-k\frac{E}{I}-\mu-\alpha_1E-\alpha_2I\\
				&&+\max\{0, \alpha_1(N^*-S-E-I)-k\}+k.
				\end {array}
			\end{equation*}
			Again, from the second equation of system \eqref{e4}, we get
			\begin{equation}\label{e17}
				\begin
				{array}{l l l l}
				\frac{\dot{E}}{E}&=&\beta_1 S+\beta_2 \frac{SI}{E}+\alpha_1(N^*-S-E-I)\\
				&&+\alpha_2(N^*-S-I)\frac{I}{E}-\alpha_2 I-(\mu+k).
				\end {array}
			\end{equation}
			Therefore, using this relations \eqref{e17} and \eqref{eq14} , we can rewrite $g_1$ as
			\begin{equation}
				\begin
				{array}{l l l l}
				g_1&=&\frac{\dot{E}}{E}-[(\beta_2-\alpha_2)S+\alpha_2(N^*-I)]\frac{I}{E}\\
				&&-(\alpha_1+\beta_1)E-\beta_2I+\beta_2S-\mu\\
				&&+\max\{\alpha_2N^*-\alpha_2S-(\alpha_1+\alpha_2)E-2\alpha_2I, 0\}.
				\end {array}
			\end{equation}

			Then, we can get
			$$\eta(B)\leq\max\{g_1,g_2\}=\frac{\dot{E}}{E}-(\mu-\theta),$$
			where $\theta=\max\{\theta_1+k-k\frac{E}{I}-\alpha_1E-\alpha_2I, \theta_2+\beta_2(S-I)-[(\beta_2-\alpha_2)S+\alpha_2(N^*-I)]\frac{I}{E}-(\alpha_1+\beta_1)E\}$, where $\theta_1=\max\{0, \alpha_1(N^*-S-E-I)-k\},$ and $\theta_2=\max\{\alpha_2N^*-\alpha_2S-(\alpha_1+\alpha_2)E-2\alpha_2I, 0\}$. Finally, we obtain
			\begin{eqnarray*}
				q&=&\frac{1}{t}\int^t_0\eta(Bx(s,x_0))ds\\
				&\leq&\frac{1}{t}\int^t_0\frac{\dot{E}}{E}ds-(\mu-\theta)\\
				&=&\frac{1}{t}\ln\frac{E(t)}{E(0)}-(\mu-\theta),
			\end{eqnarray*}
which implies that
			$$\Rightarrow\lim_{t\rightarrow\infty}\sup_{x_0\in E}\frac{1}{t}\int^t_0\eta(Bx(s,x_0))ds\leq0,\ \text{ if } \ \mu>\theta.$$
			Therefore, we can conclude that the infected equilibrium, when it exits uniquely, is globally asymptotically stable for $R_0>1$.
		\end{proof}
		\section{ Backward bifurcation}\label{sec5}
			In epidemiological models, the occurrence of backward bifurcation is an important phenomenon. Backward bifurcation in disease models have been studied by many scholars \cite{6,7,14,15,17}. In our model \eqref{e4}, there are multiple disease persistent equilibria $Q^*$ for $R_0<1$, which indicates the possibility of backward bifurcation. %
Epidemiologically, the value of $R_0$ is not sufficient to determine whether the disease will persist. When $R_0<1$, the future state of the epidemic depends on the initial size of individuals.
Our purpose is to study the existence value of the backward bifurcation in \eqref{e4}. Here, we use the famous results of Castillo--Chavez and Song \cite{14}.
		
		We simplify  system \eqref{e4} and choose $S=x_1,E=x_2,I=x_3$. If we set $X=(x_1,x_2,x_3)^T,$ then our system \eqref{e4} can be written in the form $\frac{dX}{dt}=F(X)$ with $F=(f_1,f_2,f_3)^T$, where
		\begin{eqnarray*}
			\left[
			\begin{array}{cc}
				f_1\\
				f_2\\
				f_3\\
			\end{array}
			\right]=\left[
			\begin{array}{cccc}
				f_{11} \\
				f_{22} \\
				f_{33}
			\end{array}
			\right],
		\end{eqnarray*}
		where
		\begin{equation}
			\begin
			{array}{l l l l}
			f_{11}&=&\mu N^*-\mu x_1-\beta_1 x_1x_2-\beta_2 x_1x_3,\\
			f_{22}&=&\beta_1 x_1x_2+\beta_2 x_1x_3\\
			&&+\alpha_1(N^*-x_1-x_2-x_3)x_2\\
			&&+\alpha_2(N^*-x_1-x_2-x_3)x_3\\
			&&-(\mu+k)x_2,\\
			f_{33}&=&kx_2-(\mu+\gamma)x_3.
			\end {array}
		\end{equation}
		Then, we can get the Jacobian matrix of the system at the disease-free equilibrium point $Q_0=(N^*,0,0)$ as follows
		\begin{eqnarray*}
			J_{Q_0}=\left[
			\begin{array}{cccc}
				-\mu     &-\beta_1 N^*   &-\beta_2 N^*\\
				0        &\beta_1 N^*-(\mu+k)  &\beta_2 N^*\\
				0    &k   &-(\mu+\gamma)
			\end{array}
			\right].
		\end{eqnarray*}
		
		Choosing $\beta_2$ as a bifurcation parameter, when $R_0=1$, we can obtain the critical value for $\beta_2=\beta_c=\frac{(\mu+\gamma)(\mu+k-\beta_1N^*)}{kN^*}$. In this case, the jacobian matrix $J_{Q_0}$ has a simple zero eigenvalue whose left and right eigenvectors are given by $v=(0,1,\frac{(\beta_1+\beta_2)N^*-(\mu+k)}{\mu+\gamma-k})$ and $w=(\frac{-(\mu+k)(\mu+\gamma)}{\mu k},\frac{\mu+\gamma}{k},1)^T$, respectively.

		To obtain the following quantities reported in Theorem 4.1 in \cite{14}, we have
		\begin{eqnarray*}
			\begin
			{array}{l l l l}
			a&=&\sum^3_{k,i,j=1}v_kw_iw_j\frac{\partial^2f_k}{\partial x_i \partial x_j}(Q_0,\beta_c),\\
			b&=&\sum^3_{k,i=1}v_kw_i\frac{\partial^2f_k}{\partial x_i \partial \beta_2}(Q_0,\beta_c).
			\end {array}
		\end{eqnarray*}
		It can be noted that the first component of $v$ is zero, so we do not need to find the partial derivative of $f_1$. Because the expression of $f_3$ is one-time, the second-order partial derivatives of $f_3$ are all zero. The non-zero partial derivative of $f_2$ can be written as
		\begin{eqnarray*}
			\begin
			{array}{l l l l}
			\frac{\partial^2f_2}{\partial x_1\partial x_2}&=&\beta_1-\alpha_1,~~ \frac{\partial^2f_2}{\partial x_1\partial x_3}=\beta_2-\alpha_2,\\
			\frac{\partial^2f_2}{\partial x_2\partial x_3}&=&-\alpha_1-\alpha_2,~~ \frac{\partial^2f_2}{\partial x_2^2}=-2\alpha_1,\\
			\frac{\partial^2f_2}{\partial x_3^2}&=&-2\alpha_2,~~ \frac{\partial^2f_2}{\partial x_1\partial \beta_2}=x_3=0, \\
			\frac{\partial^2f_2}{\partial x_3\partial \beta_2}&=&x_1=N^*.
			\end {array}
		\end{eqnarray*}
		
		Calculating the values of $a$ and $b$ at $(Q_0, \beta_c)$ yields
		\begin{eqnarray*}
			\begin
			{array}{l l l l}
			a&=&\frac{2(\mu+k)(\mu+\gamma)}{\mu k}[\frac{(\alpha_1-\beta_1)(\mu+\gamma)}{k}+\alpha_2-\beta_c]\\
			&&-\frac{2(\mu+\gamma)}{k}[\alpha_1+\alpha_2+\frac{\alpha_1(\mu+\gamma)}{k}]-2\alpha_2,\\
			b&=&v_2w_3N^*=N^*.
			\end {array}
		\end{eqnarray*}

		Thus, our system undergoes backward bifurcation at $\beta=\beta_c$, only when both $a$ and $b$ are positive at $(Q_0,\beta_c)$. Obviously, $b$ is always positive. Therefore, the positivity of $a$ gives the threshold condition for the backward bifurcation $$\alpha_2>\alpha^*=\frac{(\mu+\gamma)\{(\mu+k)^2(\mu+\gamma)-k\gamma\alpha_1 N^*\}}{\gamma k^2N^*}.$$

		The result can be summarized in the following theorem
		
		\begin{theorem}\label{thback}
			If  $\alpha_2>\alpha^*=\frac{(\mu+\gamma)\{(\mu+k)^2(\mu+\gamma)-k\gamma\alpha_1 N^*\}}{\gamma k^2N^*}$,  system \eqref{e4} will experience a backward bifurcation.
		\end{theorem}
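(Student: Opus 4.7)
The plan is to invoke the center manifold based bifurcation theorem of Castillo--Chavez and Song (Theorem 4.1 in \cite{14}), which says that near the critical value of the bifurcation parameter where the Jacobian at the disease-free equilibrium has a simple zero eigenvalue, the direction of bifurcation is governed by the signs of the two scalar quantities $a$ and $b$ that are already written down in the excerpt: one gets a backward (subcritical) bifurcation precisely when both $a>0$ and $b>0$. So the proof is an application: assemble the hypotheses, then reduce ``backward bifurcation'' to the single inequality $a>0$, and solve that inequality for $\alpha_2$.

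First I would verify the structural hypotheses of the Castillo--Chavez--Song theorem at $(Q_0,\beta_c)$, with $\beta_c=\tfrac{(\mu+\gamma)(\mu+k-\beta_1 N^*)}{kN^*}$ chosen so that $R_0=1$: namely that $J_{Q_0}$ has a simple zero eigenvalue with the remaining eigenvalues having negative real parts, and that the right eigenvector $w=\bigl(-\tfrac{(\mu+k)(\mu+\gamma)}{\mu k},\tfrac{\mu+\gamma}{k},1\bigr)^T$ and the left eigenvector $v=\bigl(0,1,\tfrac{(\beta_1+\beta_2)N^*-(\mu+k)}{\mu+\gamma-k}\bigr)$ satisfy $vw=1$ after normalization. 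The fact that $v_1=0$ removes any contribution of $f_1$ to $a$ and $b$, and since $f_3$ is linear in the state variables, only the second-order partials of $f_2$ enter the sums.

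Next I would compute $a$ and $b$ by plugging the nonzero second-order partials of $f_2$ (listed in the excerpt) into
\begin{equation*}
a=\sum_{k,i,j}v_k w_i w_j\,\partial^2_{x_ix_j}f_k(Q_0,\beta_c),\qquad
b=\sum_{k,i}v_k w_i\,\partial^2_{x_i\beta_2}f_k(Q_0,\beta_c).
\end{equation*}
Since $w_1<0$, $w_2,w_3>0$, the mixed terms $\partial^2_{x_1x_2}f_2=\beta_1-\alpha_1$ and $\partial^2_{x_1x_3}f_2=\beta_2-\alpha_2$ contribute with a sign flip, which is precisely what allows the reinfection rates $\alpha_1,\alpha_2$ to push $a$ into the positive range. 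The quantity $b$ collapses to $v_2 w_3 N^*=N^*>0$, so the ``$b>0$'' requirement is automatic and we only need to track $a$.

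The final step is to solve $a>0$ for $\alpha_2$. Substituting $\beta_c$ into the expression for $a$ given in the excerpt and grouping terms linear in $\alpha_2$, one obtains $a>0\Longleftrightarrow \alpha_2>\alpha^*$ with
\begin{equation*}
\alpha^{*}=\frac{(\mu+\gamma)\bigl\{(\mu+k)^2(\mu+\gamma)-k\gamma\alpha_1 N^*\bigr\}}{\gamma k^2 N^*},
\end{equation*}
which is the stated threshold. Combined with $b>0$, the Castillo--Chavez--Song criterion then yields a backward bifurcation at $\beta_2=\beta_c$, proving the theorem. I expect the only genuinely tedious part to be the last algebraic grouping: collecting the $\alpha_2$-linear terms cleanly from the two competing expressions in $a$ (the bracket multiplied by $\tfrac{2(\mu+k)(\mu+\gamma)}{\mu k}$ and the isolated $-2\alpha_2$) and cancelling the $\beta_c$-dependent pieces so that the threshold $\alpha^{*}$ comes out in the compact form above. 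Everything else is sign bookkeeping guided by the signs of the eigenvector components.
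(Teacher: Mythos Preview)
Your proposal is correct and follows essentially the same route as the paper: apply the Castillo--Chavez--Song center manifold theorem at $(Q_0,\beta_c)$ with $\beta_2$ as the bifurcation parameter, use $v_1=0$ and the linearity of $f_3$ to reduce $a$ and $b$ to contributions from $f_2$ alone, observe that $b=N^*>0$ is automatic, and then solve the inequality $a>0$ for $\alpha_2$ to obtain the threshold $\alpha^*$. Your outline is in fact slightly more careful than the paper's in that you explicitly flag the need to check the simple-zero-eigenvalue hypothesis and the eigenvector normalization $vw=1$, which the paper takes for granted.
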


			\section{Numerical simulations}\label{sec6}
			
			In this section, some numerical simulations are carried out to visualize the obtained analysis results.			
			
			In order to verify the discussion about backward bifurcation, we select a set of parameter values $N^*=60, k=0.02, \mu=0.013, \beta_1=0.0003$, $ \beta_2=0.0001$, $ \alpha_1=0.03$, $ \alpha_2=0.04$ and $\gamma=0.1$.  This set of parameters ensures that  $\alpha_2=0.04>\alpha^*=-0.1637$. Thus, $a$ and $b$ are both non-negative, which guarantees that  system \eqref{e4} experiences  a backward bifurcation. By numerical simulation, we get the bifurcation diagram  of system \eqref{e4} (see Figure \ref{F6}). It is clear   that  system \eqref{e4} has two endemic equilibria when $R_C<R_0<1$. The solid blue line above indicates a stable endemic equilibrium, while the red dotted line indicates an unstable endemic equilibrium. Also, whenever $R_0<1$, the disease-free equilibrium is locally asymptotically stable, represented by a solid blue line.

			We set the parameters $N^*=60, k=0.02, \mu=0.013$, $ \beta_1=0.0006$, $ \beta_2=0.0006$, $ \alpha_1=0.03$, $ \alpha_2=0.04$ and $\gamma=0.1$, which ensures that $R_0=1.2840>1$.  We choose different initial values to get the solution trajectory diagram of system \eqref{e4} (see Figure \ref{F2}). In this case, $b_2=4.4470\times10^{-8}>0, b_1=-2.5039\times10^{-7}<0,$ and $ b_0=-5.5068\times10^{-9}<0$. Therefore, according to case 1 of Table \ref{tab1},  system \eqref{e4} has an endemic equilibrium $Q_+^*=(21.9388, 31.9366, 5.6525)$. In this case, $c_1=1.3384>0$, $ c_2=0.1885>0$, $ c_3=0.0036>0$, and $ c_1c_2-c_3=0.2488>0$.  From Theorem \ref{th4.3}, we know that the local equilibrium $Q_+^*$ is locally asymptotically stable. Thus, $\frac{2\mu+\gamma+k}{N^*}=0.0024>\beta_1=0.0006$.  Since $R_0>1$,  according to Theorem \ref{th4.1}, the disease-free equilibrium $Q_0=(60, 0, 0)$ is unstable.
			
			In addition, when selecting parameters $\mu=0.011, \beta_1=0.0001, \beta_2=0.0003, \alpha_1=0.001, \alpha_2=0.001$ and $\gamma=0.001,$ we get $R_0=1.1613>1$. We choose different initial values to get the solution trajectory diagram of system \eqref{e4} (see Figure \ref{F3}). Then, we  get $b_2=7.3728\times10^{-12}>0, b_1=2.3680\times10^{-12}>0,$ and $ b_0=-2.6400\times10^{-10}<0$. Therefore, according to case 2 of Table \ref{tab1},  system \eqref{e4} also has an endemic equilibrium $Q_+^*=(50.3925, 3.4953, 5.8255)$. In this case, $c_1=0.0601>0$, $ c_2=8.7736\times10^{-4}>0$, $ c_3=1.2855\times10^{-6}>0$, $ c_1c_2-c_3=5.1437\times10^{-5}>0$. From Theorem \ref{th4.3}, the local equilibrium $Q_+^*$ is locally asymptotically stable. Under the above parameters,  $\frac{2\mu+\gamma+k}{N^*}=7.1667\times10^{-4}>\beta_1=0.0001$, so we conclude that the disease-free equilibrium $Q_0$ is unstable. This shows that the system \eqref{e4} has an endemic equilibrium and a disease-free equilibrium, where the endemic equilibrium is globally asymptotically stable and the disease-free equilibrium is unstable.
			
			Selecting $N^*=60, k=0.02, \mu=0.013$, $ \beta_1=0.0003$, $\beta_2=0.0001$, $ \alpha_1=0.03$, $ \alpha_2=0.03$ and $\gamma=0.1$,  we  get $1>R_0=0.5775>R_c=0.3396$. We also get $b_2=1.9051\times10^{-8}>0$, $ b_1=-3.1238\times10^{-8}<0$, $ b_0=8.1900\times10^{-9}>0$, and $ \bigtriangleup=b_1^2-4b_0b_2=3.5167\times10^{-16}>0$. According to case 6 of Table \ref{tab1},  system \eqref{e4} has two endemic equilibria which are $Q_+^*=(50.7976, 7.4129, 1.3120)$ and $Q_-^*=(57.4029, 1.8513, 0.3277)$. However, for $Q_-^*$, we have $c_3=-1.5362\times10^{-5}<0$, so the endemic equilibrium $Q_-^*$ is unstable. As for $Q_+^*$, we get $c_1=0.3935>0, c_2=0.0367>0, c_3=6.1510\times10^{-5}>0,$ and $ c_1c_2-c_3=0.0144>0$, so endemic equilibrium $Q_+^*$ is stable. From Theorem \ref{th4.1}, we know that the disease-free equilibrium $Q_0$ is locally asymptotically stable because $\frac{2\mu+\gamma+k}{N^*}=0.0024>\beta_1=0.0003$. In summary,  system \eqref{e4} has two endemic equilibria $Q_+^*, Q_-^*$ and a disease-free equilibrium $Q_0$ when $R_c<R_0<1$, where both $Q_+^*$ and $Q_0$ are locally asymptotically stable (see Figure \ref{F4}).

			Considering the case of $R_0<R_c<1$, we set the parameters $\mu=0.013$, $ \beta_1=0.0003$, $ \beta_2=0.0001$,  $ \alpha_1=0.01$,  $ \alpha_2=0.01$ and $\gamma=0.1$. Then, we get $R_0=0.5776<R_c=0.8489<1$. In this case, $b_2=6.3504\times10^{-9}>0$, $ b_1=-8.6276\times10^{-9}<0$, $ b_0=8.1900\times10^{-9}>0$, and  $  \bigtriangleup=b_1^2-4b_0b_2=-1.3360\times10^{-16}<0$, which satisfied the Case 6 of Table \ref{tab1}. Therefore, the system \eqref{e4} has only one disease-free equilibrium $Q_0=(60,0,0)$, and no endemic equilibrium. Also, $\frac{2\mu+\gamma+k}{N^*}=0.0024>\beta_1=0.0003$, so according to Theorem \ref{th4.1}, the disease-free equilibrium $Q_0$ is locally asymptotically stable. In this case, the solutions of \eqref{e4} with different initial values converge to $Q_0$ as shown in Figure \ref{F5}.
			

From the numerical simulations above, we find that the backward bifurcation and the existence of multiple equilibria complicate the dynamics of the model.
As shown in Figure \ref{F6}, we find that when $R_0$ crosses $1$, the number of infectious cases will suddenly rebound. In addition, when the system is in an epidemic state, slowly reducing $R_0$ to the critical value of $1$, we find that even if $R_0$ is slightly less than $ 1 $ and greater than $R_c$, the system may not return to the disease-free state, but still in the epidemic state. Therefore, $R_0<1$ does not ensure the eradication of the disease. Figure \ref{F6} shows that only when $R_0$ is less than $R_c$, the endemic equilibria disappear and the system converges to the infection-free steady state $Q_0$. Therefore, it can be concluded that $R_0<R_c<1$ is a sufficient condition for eradicating disease.
			
In the following, we use numerical simulation  to evaluate the effect of contact rates $\beta_1$ and $\beta_2$ on the threshold $ R_c $.
Here, we use the same parameters as in Figure \ref{F6} except for $\beta_1$ and $\beta_2$.
%
%
We appropriately reduce or increase $\beta_1$ and $\beta_2$. From Figure \ref{F7} (a), we can see that $R_c$ gradually decreases with the increase of $\beta_1$. When $\beta_1$ increases to 0.003129, $R_c=0$, which means the disease cannot be eradicated. We appropriately adjust the value of $\beta_2$ to obtain the situation shown in Figure \ref{F7} (b).  As $\beta_2$ increases, $ R_c $ gradually decreases. When $\beta_2$ increases to 0.000173, $R_c=0$, which means that the disease will persist and cannot be eradicated.
In addition, we notice that $\alpha_1$ and $\alpha_2$ in the model also have significant  influence on $ R_c $.

Figure \ref{F7}(C) shows the effect of changes in $\alpha_1$ on $R_c$. The rest of the parameters are the same as those of Figure \ref{F6}.
We find that $R_c$ gradually decreases with the increase of $\alpha_1$. When $\alpha_1=0.04368$, $R_c=0$, which means that the disease cannot be eradicated.
Similarly, we can use numerical simulations to study the effect of changes in $\alpha_2$ on $R_c$ (see Figure \ref{F7} (d)). We find that $ R_c $ will decreases as $\alpha_2$ increases and $ R_c =0$ when $ \alpha_2=0.1172 $.

			\section{Basic reinfection number and robustness of bistability }\label{sec55}
			
			The basic reinfection number \cite{Song} and the basic reproduction number \cite{13}  characterize the spread of infectious disease. Below, we will combine the basic reinfection number and the basic reproduction
number to give a complete disease control measure when there is a reinfection (or relapse).
%
The basic reinfection number is given by
			$$  R_r=\frac{\alpha_2\gamma k^2N^*}{(\mu+k)^2(\mu+\gamma)^2}+\frac{\alpha_1 \gamma k N^*}{(\mu+k)^2(\mu+\gamma)}, $$
			which is calculated from $ \alpha_2=\alpha^* $. Then, Theorem \ref{thback} can be rewritten as:
			\begin{theorem}
			If the basic reinfection number $ R_r>1 $, then   system \eqref{e4} will experience a backward bifurcation.
	    	\end{theorem}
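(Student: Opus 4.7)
The plan is to observe that this theorem is essentially a restatement of Theorem \ref{thback} in terms of the newly defined quantity $R_r$, so the whole proof reduces to verifying the algebraic equivalence $R_r>1 \iff \alpha_2>\alpha^*$ and then invoking Theorem \ref{thback}. There is no new dynamical content to establish; the work is entirely symbolic manipulation.

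First, I would start from the inequality $R_r>1$ written out as
\[
\frac{\alpha_2 \gamma k^2 N^*}{(\mu+k)^2(\mu+\gamma)^2}+\frac{\alpha_1 \gamma k N^*}{(\mu+k)^2(\mu+\gamma)}>1,
\]
and clear the denominator by multiplying both sides by the strictly positive quantity $(\mu+k)^2(\mu+\gamma)^2$ (positivity uses only that all parameters are positive, as assumed in the model). This yields
\[
\alpha_2 \gamma k^2 N^* + \alpha_1 \gamma k N^* (\mu+\gamma) > (\mu+k)^2 (\mu+\gamma)^2.
\]
Next, I would isolate $\alpha_2$ by subtracting the $\alpha_1$-term and dividing by $\gamma k^2 N^*>0$, obtaining
\[
\alpha_2 > \frac{(\mu+\gamma)\left[(\mu+k)^2(\mu+\gamma)-k\gamma\alpha_1 N^*\right]}{\gamma k^2 N^*}=\alpha^*,
\]
which is exactly the threshold condition of Theorem \ref{thback}. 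Each step is reversible, so $R_r>1$ and $\alpha_2>\alpha^*$ are equivalent.

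Finally, applying Theorem \ref{thback} directly, the equivalence established above implies that system \eqref{e4} undergoes a backward bifurcation at $\beta_2=\beta_c$ whenever $R_r>1$, completing the proof. The only point that requires mild care is checking that the sign of $(\mu+k)^2(\mu+\gamma)-k\gamma\alpha_1 N^*$ does not affect the logic: the derivation does not require $\alpha^*>0$, so even if $\alpha^*$ turns out to be negative (in which case backward bifurcation occurs automatically from Theorem \ref{thback} for any positive $\alpha_2$), the equivalence with $R_r>1$ still holds and covers that regime as well. Thus there is really no obstacle; the theorem is a compact reformulation that highlights $R_r=1$ as the natural tipping point for reinfection-induced backward bifurcation.
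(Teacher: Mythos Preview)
Your proposal is correct and follows exactly the approach the paper takes: the paper introduces $R_r$ precisely by the remark that it is ``calculated from $\alpha_2=\alpha^*$'' and then states the theorem as a direct rewriting of Theorem~\ref{thback}, without giving any further proof. Your explicit algebraic verification that $R_r>1\Longleftrightarrow \alpha_2>\alpha^*$ simply fills in the one line the paper leaves implicit.
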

			
%
As we all know, when studying the primary infection, we use the basic reproduction number $R_0 $ to measure the force of the primary infection.
Thus, corresponding to $ R_0 $, the basic reinfection number $ R_r $  measures the reinfection forces (or capability of relapse). If the basic reproduction number $ R_0 $ is greater than one, the primary infection will invade a population.
			In the range of $ R_0<1 $, if reinfection force is strong enough to make the basic reproduction number $ R_r> 1 $, the disease may be persistent. However, when the basic reproduction number $ R_r $ is too small, there are not enough recovered individuals to be reinfected, then the disease will disappear completely. Besides, the basic reproduction number $ R_r $ also  characterize the type of bifurcation when the basic reproduction number is equal to one. If the basic reinfection number is greater than one, the bifurcation is backward. Otherwise it is forward.

			Then, we define the robustness of bistable system \eqref{e4}, which is represented  by the definite integral of positive steady solution curve on interval $[R_c, 1]$
				\begin{equation*}
				\begin
				{array}{l l l l}
		 R&=&\int_{R_c}^{1}\left(\frac{-b_1+\sqrt{\Delta}}{2b_2}- \frac{-b_1-\sqrt{\Delta}}{2b_2}\right)dR_0\\
		 &=&\int_{R_c}^{1}\frac{\sqrt{\Delta}}{b_2}dR_0,		
		 	\end{array}
	 		\end{equation*}
			where $ \Delta=b_1^2-4b_2b_0 $ is the discriminant of polynomial \eqref{e10}. Based on the definition of robustness $R$ and the discussion about backward bifurcation, we give following theorem.
			\begin{theorem}
				If the robustness $ R>0 $, then system \eqref{e4} will experience a backward bifurcation.
			\end{theorem}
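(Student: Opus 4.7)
The plan is to show that the hypothesis $R>0$ forces the algebraic configuration of Case~6 of Table~\ref{tab1} on a nondegenerate range of $R_0$ strictly below one, which is precisely the signature of a backward bifurcation.

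First I would argue that the integrand $\sqrt{\Delta}/b_2$ is real and nonnegative on $[R_c,1]$. The coefficient $b_2$ is strictly positive by inspection of its defining expression, since every factor is a positive combination of positive parameters. By the lemma stated immediately before Theorem~\ref{th4.2}, $\Delta\geq 0$ exactly when $R_0\geq R_c$, with $\Delta>0$ on the open interval $(R_c,1)$. Hence $R$ is a nonnegative real number, strictly positive if and only if $[R_c,1]$ has positive length, i.e.\ if and only if $R_c<1$.

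Next I would extract sign information on $b_1$. From the explicit expression $R_c = 1 - \frac{b_1^{2}}{4b_2\mu k^{2}(\mu+k)(\mu+\gamma)}$, and observing that the denominator is positive, $R_c<1$ is equivalent to $b_1\ne 0$. Since the formula $(-b_1\pm\sqrt{\Delta})/(2b_2)$ appearing in the definition of $R$ is being interpreted as a pair of coexisting positive endemic states, and since $b_0=\mu k^{2}(\mu+k)(\mu+\gamma)(1-R_0)>0$ for every $R_0\in(R_c,1)$, Vieta's formulas give the positive root product $b_0/b_2>0$; positivity of both roots then forces $b_1<0$. This puts the system squarely in Case~6 of Table~\ref{tab1}.

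Finally, with $b_2>0$, $b_1<0$, $b_0>0$, and $\Delta>0$ throughout $(R_c,1)$, Theorem~\ref{th4.2}(6)(iii) supplies two distinct endemic equilibria $Q_{\pm}^{*}$ for every such $R_0$, while Theorem~\ref{th4.1} guarantees that the disease-free equilibrium $Q_0$ is locally asymptotically stable on the same range. Coexistence of the stable $Q_0$ with a pair of endemic equilibria for $R_0$ strictly below one is exactly what is meant by backward bifurcation. Equivalently, one could rewrite $R_c<1$ (combined with $b_1<0$) as the parameter inequality $\alpha_2>\alpha^{*}$, that is, $R_r>1$, and then invoke Theorem~\ref{thback} in its $R_r$-reformulation directly. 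The only step requiring genuine bookkeeping is the middle sign argument that pins down $b_1<0$; the rest is a transparent consequence of the discriminant lemma.
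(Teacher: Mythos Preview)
The paper offers no argument beyond the sentence preceding the theorem: $R$ is introduced as the area enclosed by the \emph{positive} steady-state loop of the bistable system, so $R>0$ is taken as tautologically equivalent to the presence of that loop, i.e.\ to backward bifurcation. There is nothing to compare your proof against except this definitional reading.

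Your write-up is more detailed, but the step that pins down $b_1<0$ is circular and cannot be rescued from the bare formula. You justify $b_1<0$ by saying the roots ``are being interpreted as a pair of coexisting positive endemic states''; but positivity of those roots is exactly the backward-bifurcation conclusion, not a hypothesis available from $R>0$. Working from the displayed integral $R=\int_{R_c}^{1}\sqrt{\Delta}/b_2\,dR_0$ alone, one gets $R>0$ whenever $R_c<1$, which by the $R_c$ formula is equivalent merely to $b_1\neq 0$. Case~5 of Table~\ref{tab1} ($b_2>0$, $b_1>0$, $b_0>0$) is fully compatible with $R_c<1$ and a strictly positive integral, yet there the quadratic has two \emph{negative} roots, no endemic equilibria, and no backward bifurcation. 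Hence the implication $R>0\Rightarrow b_1<0$ fails for the closed-form integral; it holds only under the paper's interpretive convention that $R$ vanishes when the branches $I^*_\pm$ are not positive. Your alternative route via $\alpha_2>\alpha^*$ (equivalently $R_r>1$) inherits the same gap, since you still need $b_1<0$ to reach it. What you call ``genuine bookkeeping'' is in fact the entire content of the theorem, and it rests on the definition rather than on any computation.
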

		
			The robustness of bistable system can be used to describe the system affected by the change of initial value. The values of  $ R $ with different values of $ \beta_1,~\beta_2,~\alpha_1  $ and $\alpha_2 $ are listed in Tables \ref{tab2} - \ref{tab5}. From the tables, we can see that the value of $ R $ increases with the increase of contact rate, that is, the higher the contact rate is,  the  stronger the robustness of bistable system is. The robustness $R$ can be used to express the difficulty of completely eliminating the disease. The larger the $R$, the stronger the robustness of the bistable system and the more difficult it is to eliminate the disease.

					\begin{table}[!ht]
						\centering
							\caption{The robustness of bistable system $ R $ with different values of $\beta_1$ (the rate of susceptible individuals entering the exposed compartment due to contact with exposed individuals). The other parameters are the same as those used in Figure \ref{F6}.}
				 \begin{tabular}{l|lll|l}
						\hline
						\hspace{0.2cm}$\beta_1$~~~    &  \hspace{1cm} $R_r $   &  \hspace{1.4cm} $ R $\\
						\hline
						0.00026 ~~~  &~~~~~~  36.1585 &  \hspace{1.1cm}   0.033  \\
						0.00028   ~~~ &~~~~~~  36.1585 & \hspace{1.1cm}  0.2514\\
						0.0003      ~~~ &~~~~~~  36.1585 & \hspace{1.1cm}  0.7674\\
						0.00031     ~~~   &~~~~~~  36.1585 &  \hspace{1.1cm}   1.155\\
						0.0003129    ~~~ &~~~~~~  36.1585 & \hspace{1.1cm} 1.2844 \\
						\hline
					\end{tabular}\label{tab2}
	            	\end{table}
            	
						\begin{table}[!ht]
						\centering
						\caption{The robustness of bistable system $ R $ with different values of $\beta_2$ (the
							rate of susceptible individuals entering the exposed compartment due to contact with infective individuals). The other parameters are the same as those used in Figure  \ref{F6}. }
						 \begin{tabular}{l|lll|l}
							\hline
						\hspace{0.2cm}	$\beta_2$~~~    &  \hspace{1cm} $R_r $   &  \hspace{1.4cm} $ R $\\
							\hline
							0.00001 ~~~  &~~~~~~  36.1585 & \hspace{1.1cm}  0.3299  \\
							0.00005  ~~~ & ~~~~~~  36.1585 & \hspace{1.1cm}  0.498\\
							0.0001      ~~~ &~~~~~~  36.1585 & \hspace{1.1cm}   0.7674\\
							0.00015     ~~~   & ~~~~~~ 36.1585  & \hspace{1.1cm}    1.1058\\
							0.000173    ~~~ & ~~~~~~ 36.1585  &\hspace{1.1cm} 1.2853 \\
							\hline
						\end{tabular}\label{tab3}
					\end{table}
					
						\begin{table}[!ht]
						\centering
						\caption{The robustness of bistable system $ R $ with different values of $\alpha_1$ (the rate of recovering individuals
							re-entering the exposed compartment due to contact
							with exposed individuals). The other parameters are the same as those used in Figure  \ref{F6}. }
					 \begin{tabular}{l|lll|l}
							\hline
						\hspace{0.2cm}	$\alpha_1$~~~  & \hspace{1cm} $R_r $   &  \hspace{1.4cm} $ R $\\
							\hline
							0.01 ~~~  & ~~~~~~  16.6554  &  \hspace{1.1cm}   0.2618  \\
							0.02   ~~~ &  ~~~~~~  26.4069  &  \hspace{1.1cm}  0.5115\\
							0.03      ~~~ &  ~~~~~~  36.1585 &  \hspace{1.1cm}   0.7674\\
							0.04     ~~~   & ~~~~~~   45.9101   & \hspace{1.1cm}   1.0257\\
							0.04368    ~~~ & ~~~~~~   49.4987  & \hspace{1.1cm} 1.1211 \\
							\hline
						\end{tabular}\label{tab4}
					\end{table}
					\begin{table}[!ht]
					\centering
					\caption{The robustness of bistable system $ R $ with different values of $\alpha_2$ (the rate of recovering individuals re-entering the exposed compartment due to contact with infected individuals). The other parameter values  are the same as those used in Figure  \ref{F6}. }
					\begin{tabular}{l|lll|l}
						\hline
					\hspace{0.2cm}	$\alpha_2$~~~   & \hspace{1cm} $R_r $   &  \hspace{1.4cm} $ R $\\
						\hline
						0.00001 ~~~  &~~~~~~   29.2565  &  \hspace{1.1cm}   0.5859  \\
						0.01   ~~~ & ~~~~~~   30.9807  & \hspace{1.1cm}  0.6311\\
						0.05      ~~~ & ~~~~~~  37.8845 &  \hspace{1.1cm}   0.813\\
						0.1     ~~~   &   ~~~~~~ 46.5142  &  \hspace{1.1cm}    1.0417\\
						0.1172    ~~~ &   ~~~~~~ 49.4828    &  \hspace{1.1cm} 1.1206 \\
						\hline
					\end{tabular}\label{tab5}
				\end{table}
			\section{Discussion}\label{sec7}
			
			In this article, we studied the SEIRE model of an infectious disease and developed a compartment model to study the transmission of the infection. In order to facilitate the calculation, we  reduced the dimension of the initial system to get system \eqref{e4}. We prove the positivity and boundedness of solutions for system \eqref{e4}. We get the basic reproduction number $R_0$. Then, we present  the existence conditions of equilibria and  their stability.
			
			Also, we find that under some conditions, the system will undergo backward bifurcation. This means that $R_0<1$ does not guarantee the eradication of the disease. 
Only when the system has no endemic equilibria, i.e., $ R_0<R_c$, the disease will be totally eradicated. 
The results suggest that disease rebound may occur even when the basic reproductive number is less than $1$.
In epidemic control, it is necessary to ensure that the basic infection number is far below 1 to completely control the epidemic.
Our analysis results were verified by numerical simulation. We found that the system exhibits bistability under certain conditions, and a backward bifurcation occurs.
%
We simulated the effect of $\beta_1, ~\beta_2, ~\alpha_1$ and $\alpha_2$ on the threshold $R_c$.
We find that reducing these parameters can increase $R_c$, implying that reducing the contact rates with exposed and infected individuals is beneficial to epidemic control. 
%
Lastly, we give the basic reinfection number $ R_c $ and the robustness $R$.

			\newpage

			\begin{figure}[!h]
				\begin{center}
					
					{\rotatebox{0}{\includegraphics[width=0.5\textwidth,
							height=40mm]{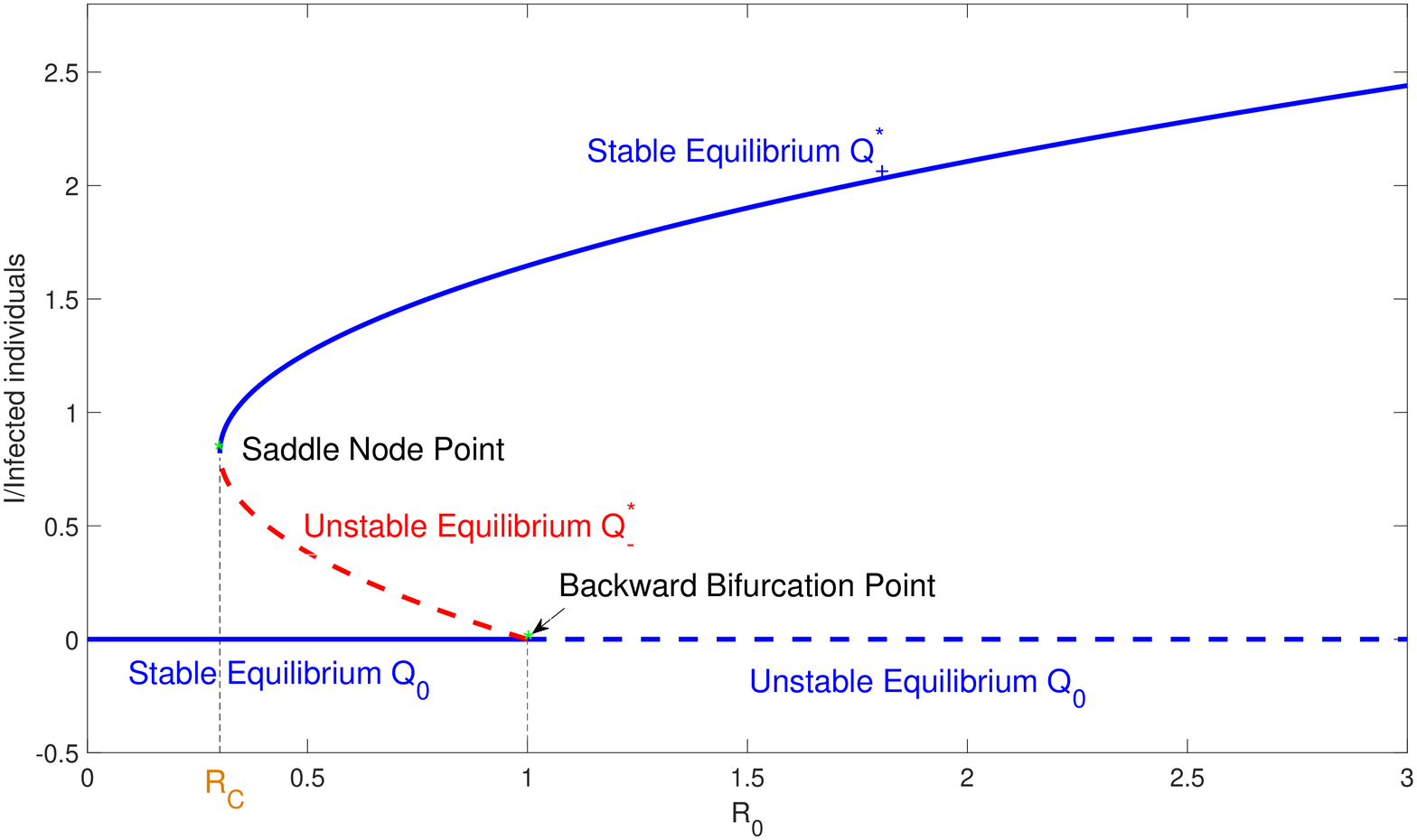}}}
					\caption{
						\footnotesize Bifurcation diagrams of system \eqref{e4}. Here, $N^*=60, \,k=0.02,\, \mu=0.013, \,~\beta_1=0.0003,\, ~\beta_2=0.0001,\, ~\alpha_1=0.03, \, ~\alpha_2=0.04,$ and $~\gamma=0.1$.  The upper solid blue line indicates the stable endemic equilibrium $ Q^*_+ $, and the lower solid blue line indicates the stable disease-free equilibrium $ Q_0 $; the red and blue dotted lines indicate the unstable endemic equilibrium $ Q^*_+ $ and unstable disease-free equilibrium $ Q_0 $, respectively.}\label{F6}
				\end{center}
			\end{figure}

			\begin{figure}[!h]
				\begin{center}
					{\rotatebox{0}{\includegraphics[width=0.48 \textwidth,
							height=40mm]{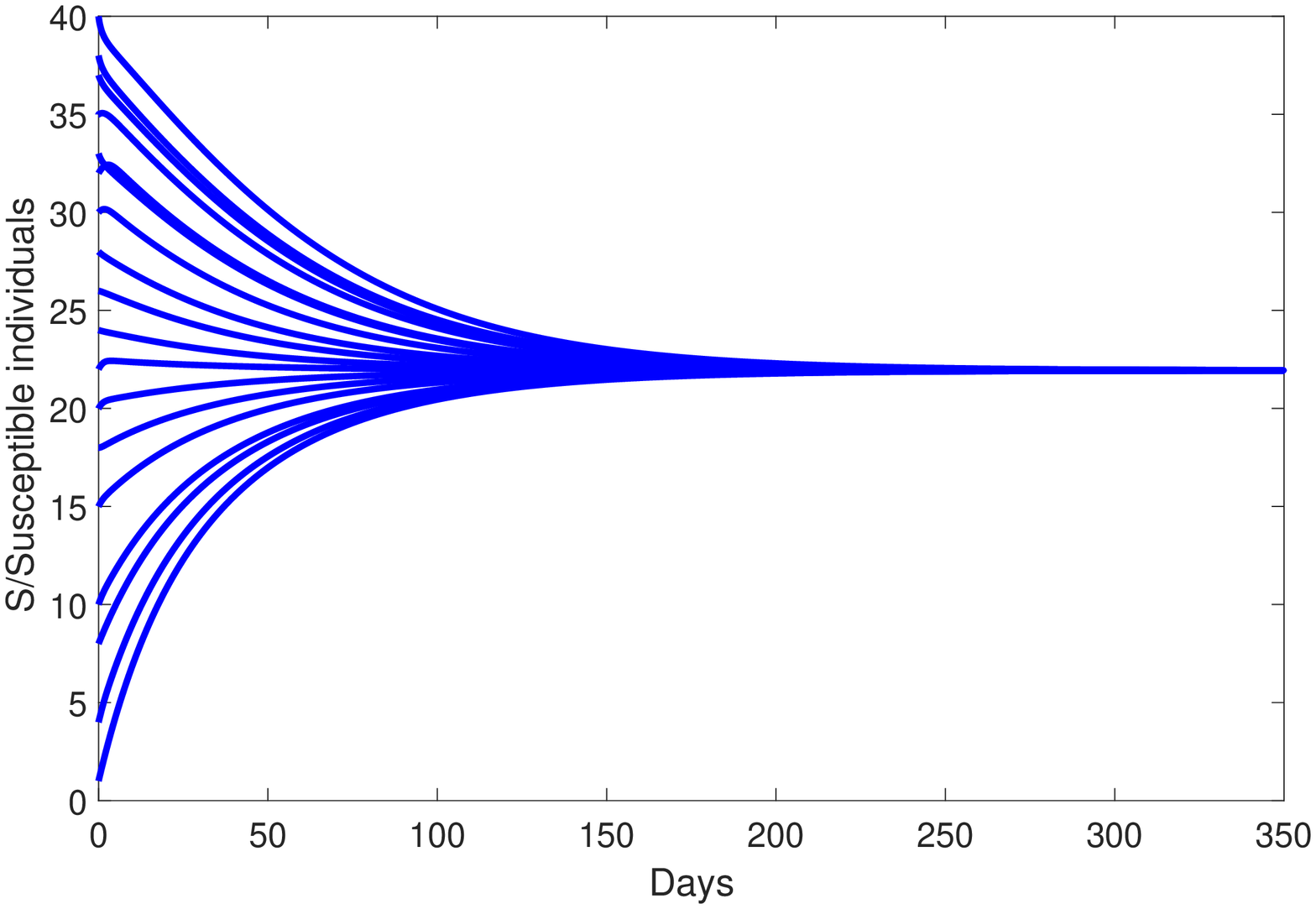}}} {\rotatebox{0}{\includegraphics[width=0.48
							\textwidth, height=40mm]{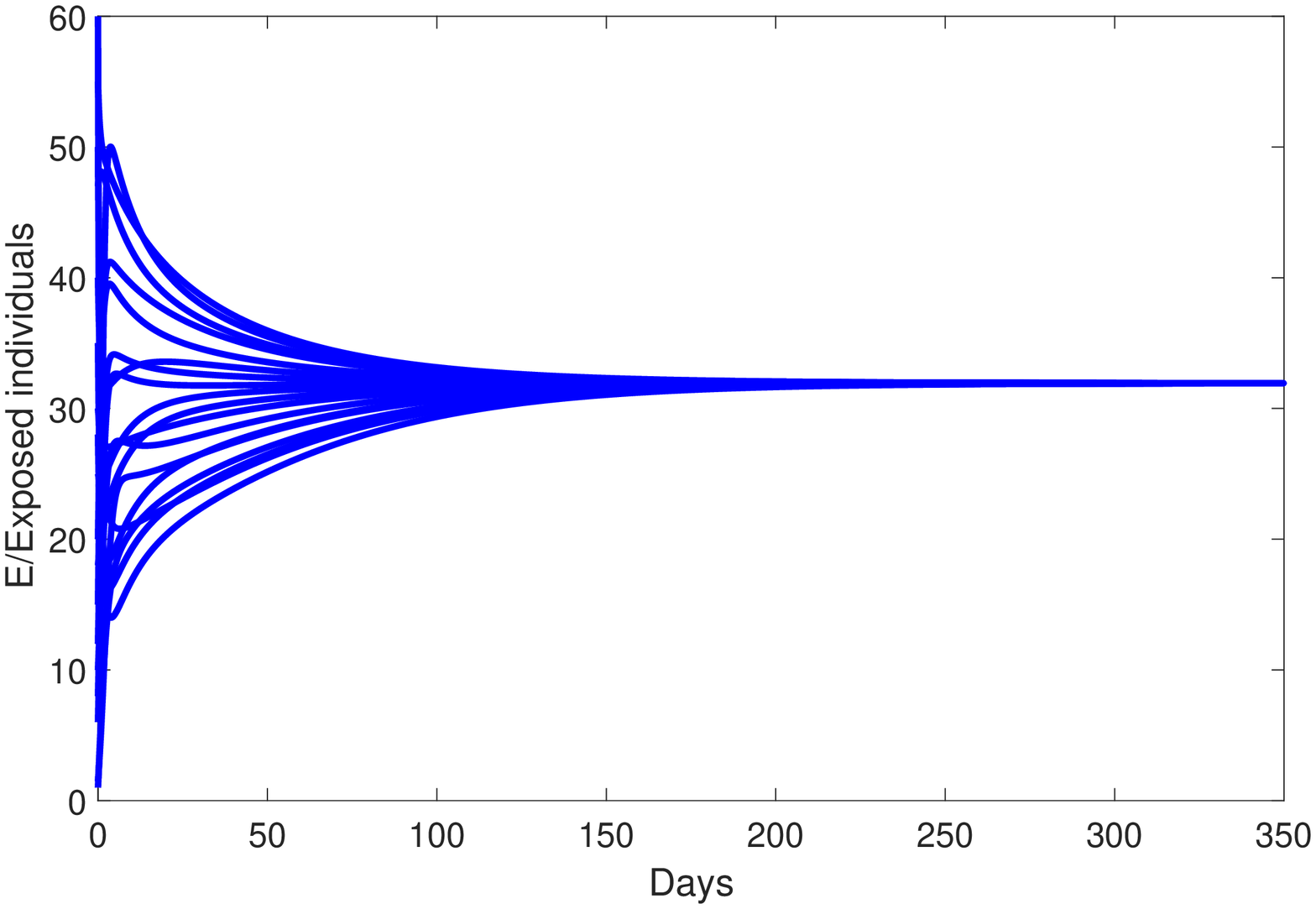}}}
					{\rotatebox{0}{\includegraphics[width=0.48 \textwidth,
							height=40mm]{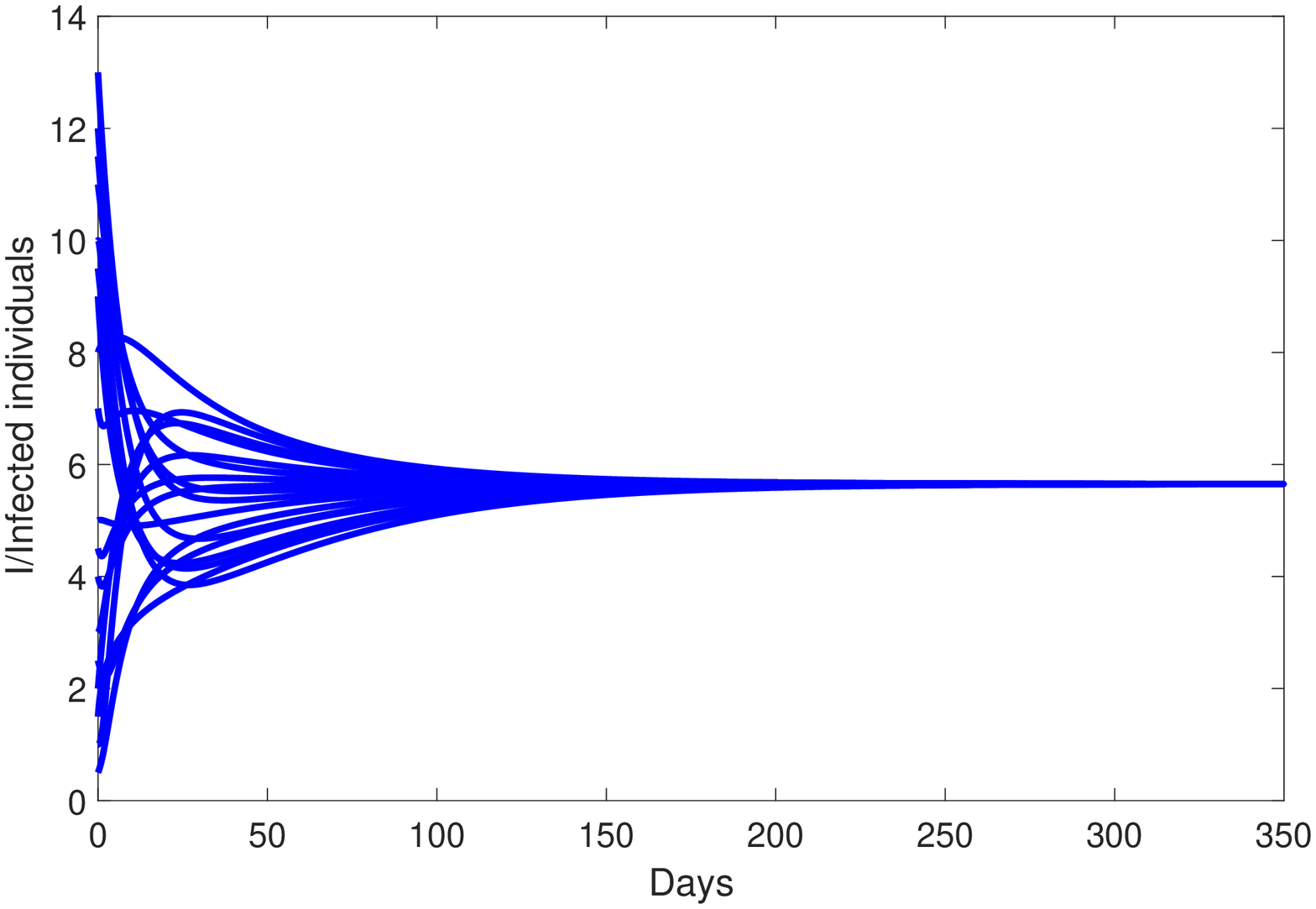}}} {\rotatebox{0}{\includegraphics[width=0.48
							\textwidth, height=40mm]{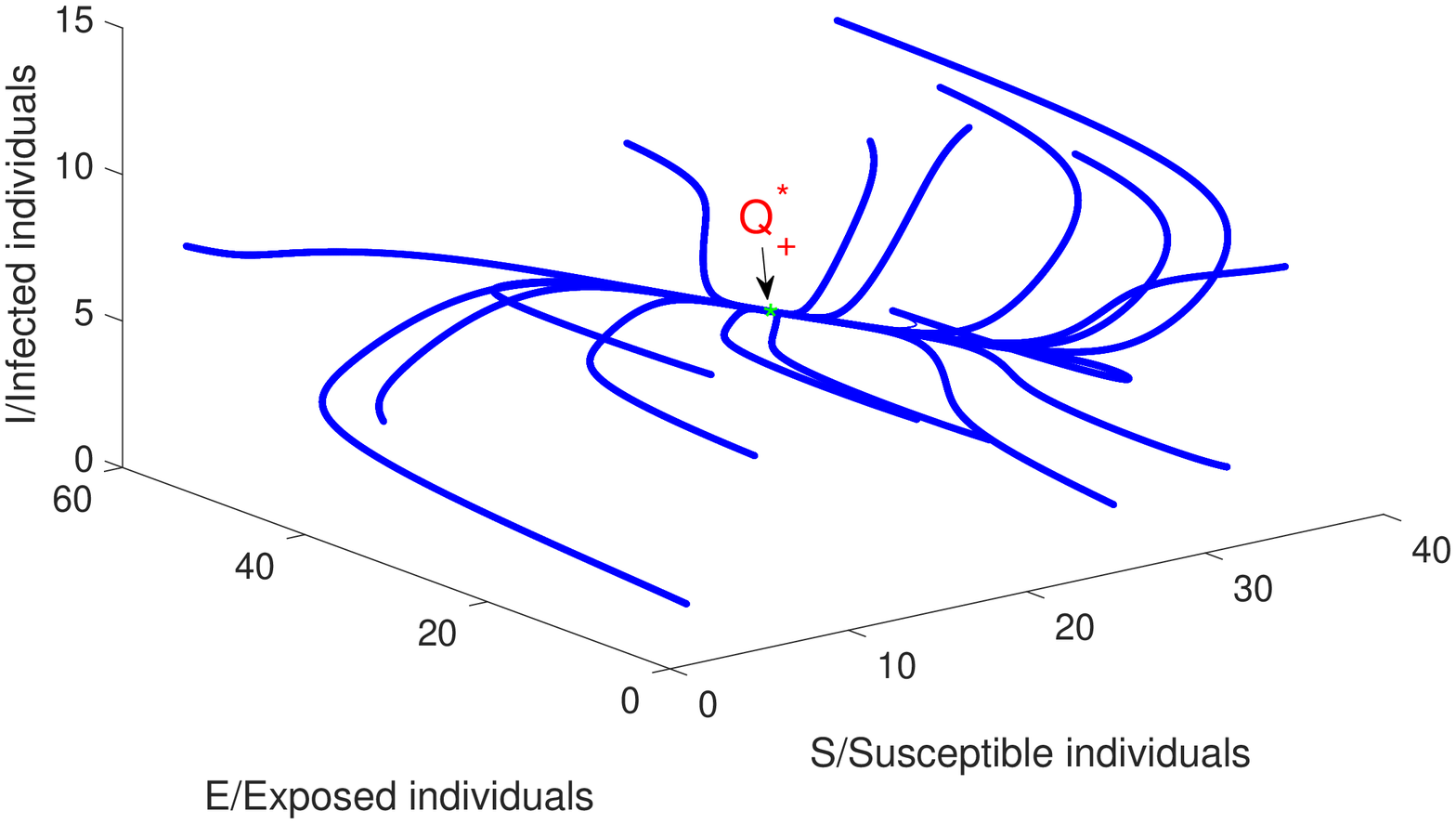}}}
					\caption{
						\footnotesize  Time histories and trajectories of system \eqref{e4} with different initial values for ~$N^*=60, \,k=0.02, \,\mu=0.013, \,~\beta_1=0.0006,\, ~\beta_2=0.0006, ~\alpha_1=0.03, ~\alpha_2=0.04,$ and $ ~\gamma=0.1$. Here, ~$R_0>1$, ~$b_2>0, ~b_1<0,$ and $ ~b_0<0$. We can see that the trajectory of the system converges to $Q_+^*\approx(21.9388, 31.9366, 5.6525)$. Here, $Q_+^*$ is globally asymptotically stable. }\label{F2}
				\end{center}
			\end{figure}
			
			\begin{figure}[!h]
				\begin{center}
					{\rotatebox{0}{\includegraphics[width=0.48 \textwidth,
							height=40mm]{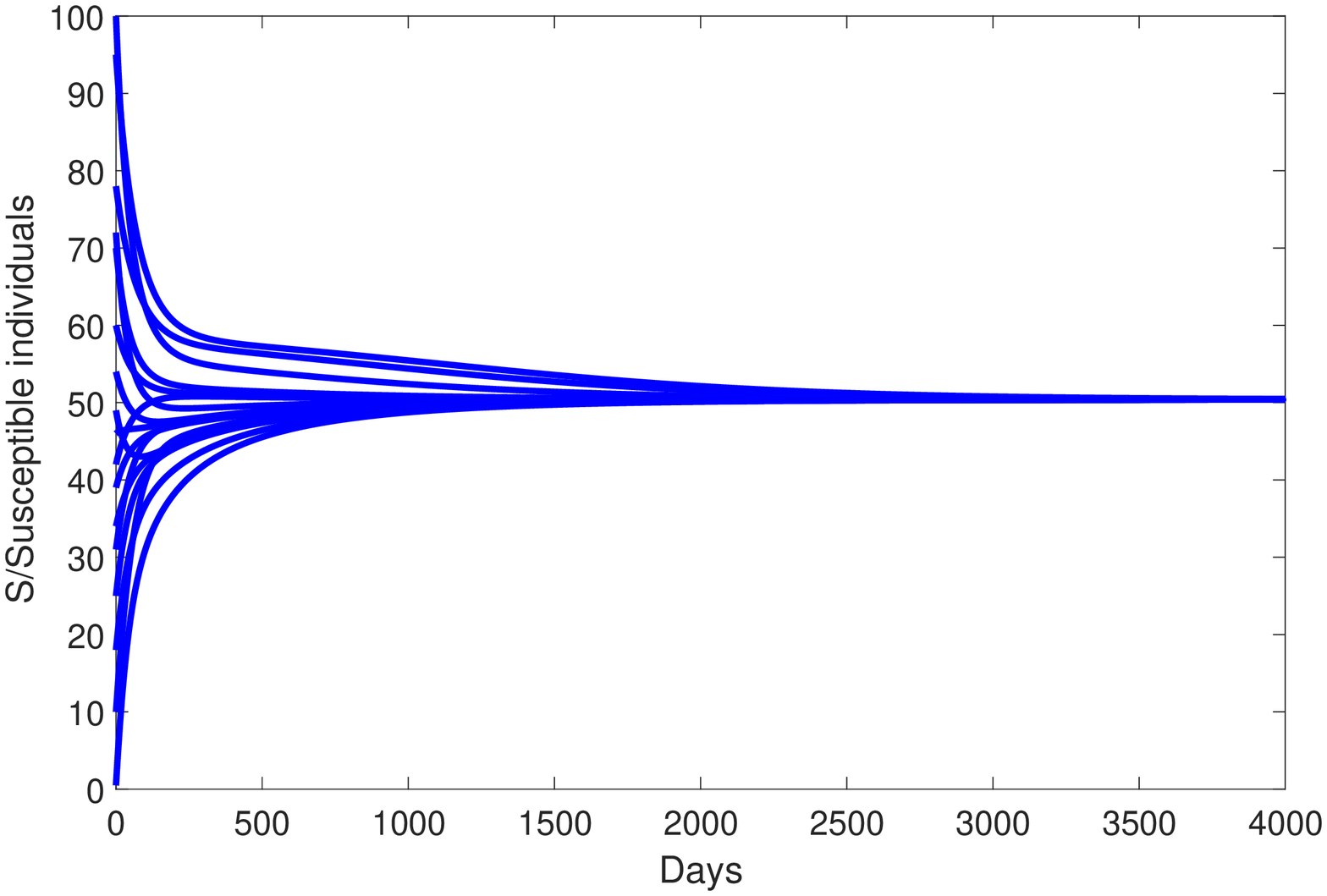}}} {\rotatebox{0}{\includegraphics[width=0.48
							\textwidth, height=40mm]{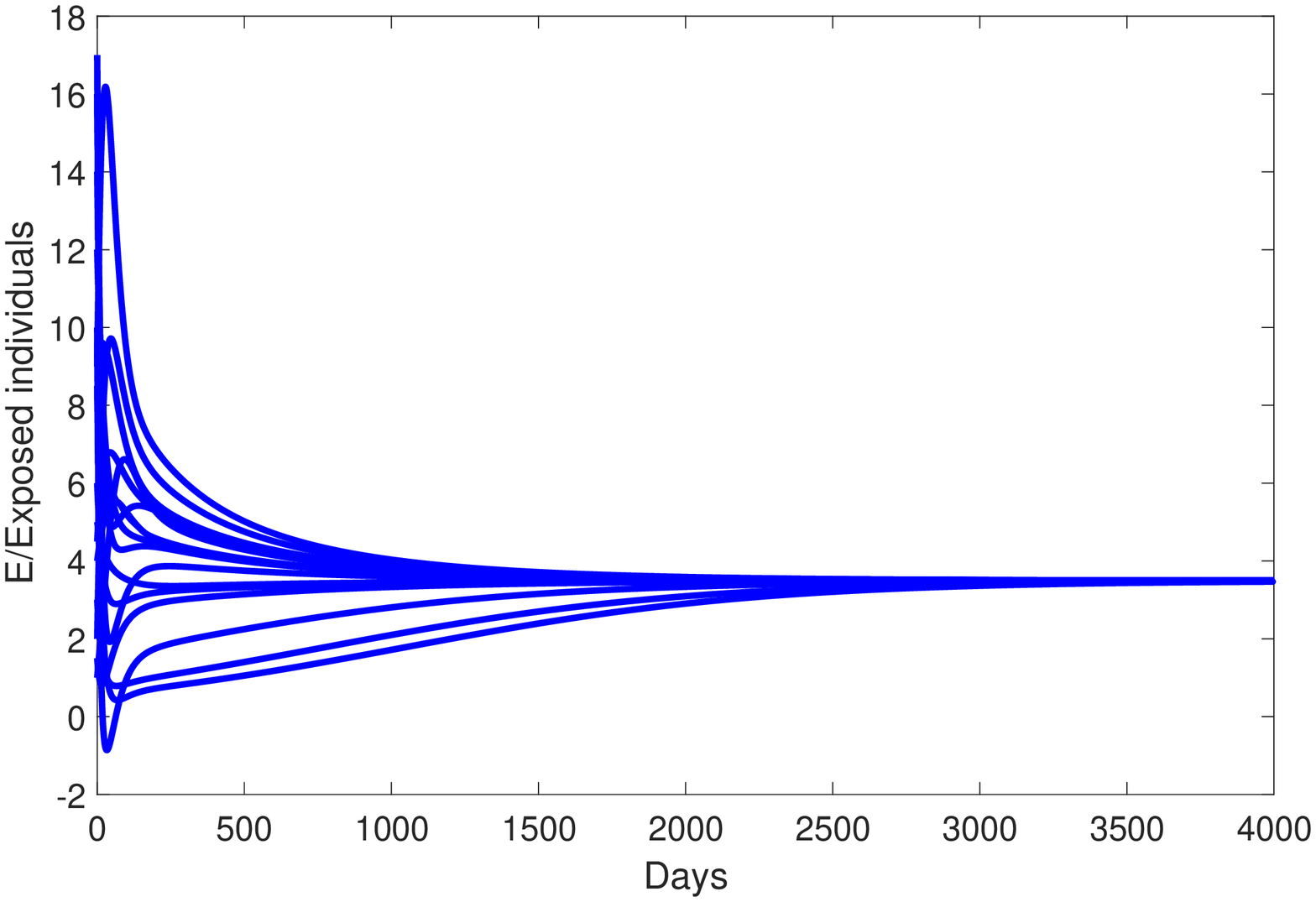}}}
					{\rotatebox{0}{\includegraphics[width=0.48 \textwidth,
							height=40mm]{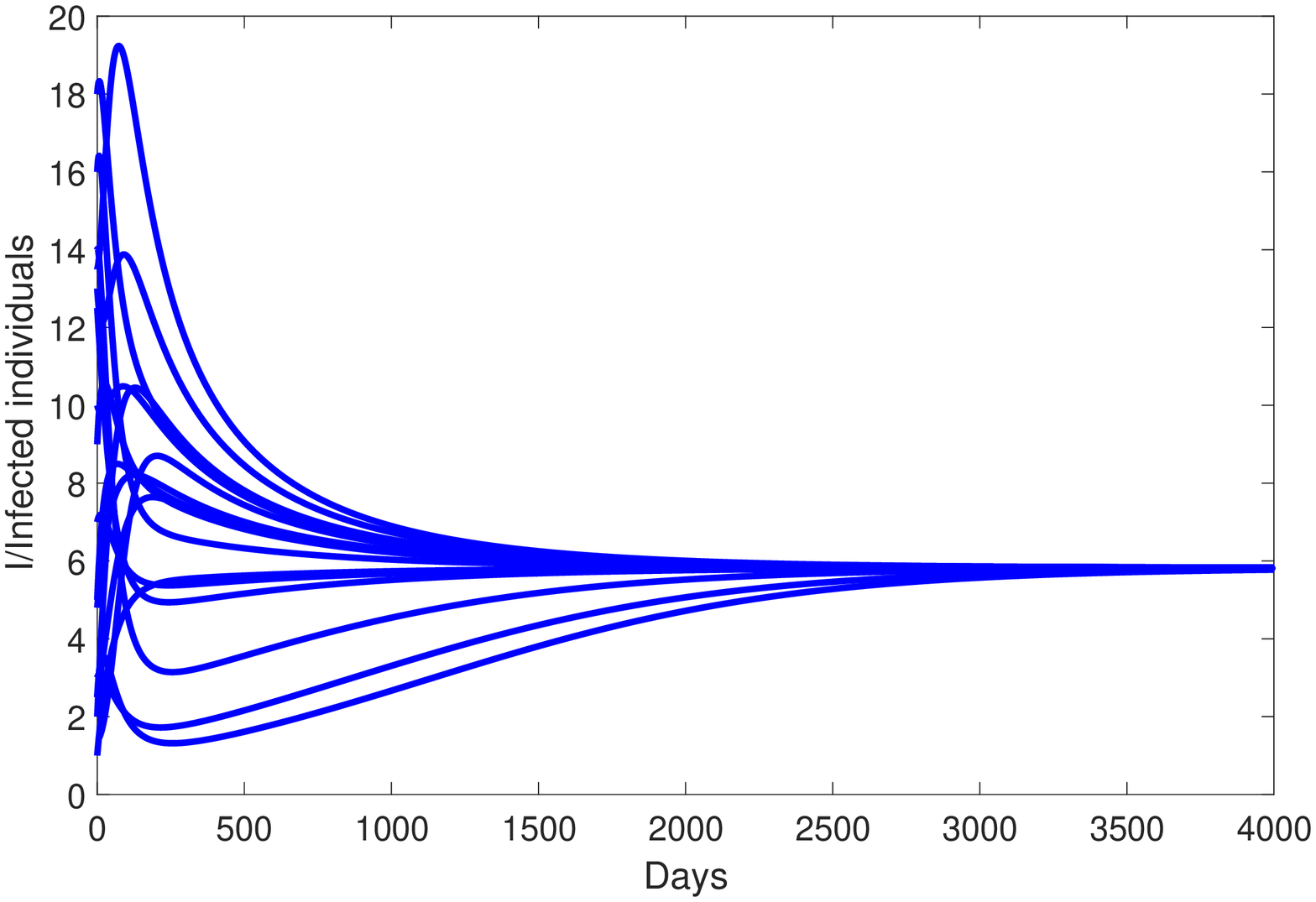}}} {\rotatebox{0}{\includegraphics[width=0.48
							\textwidth, height=40mm]{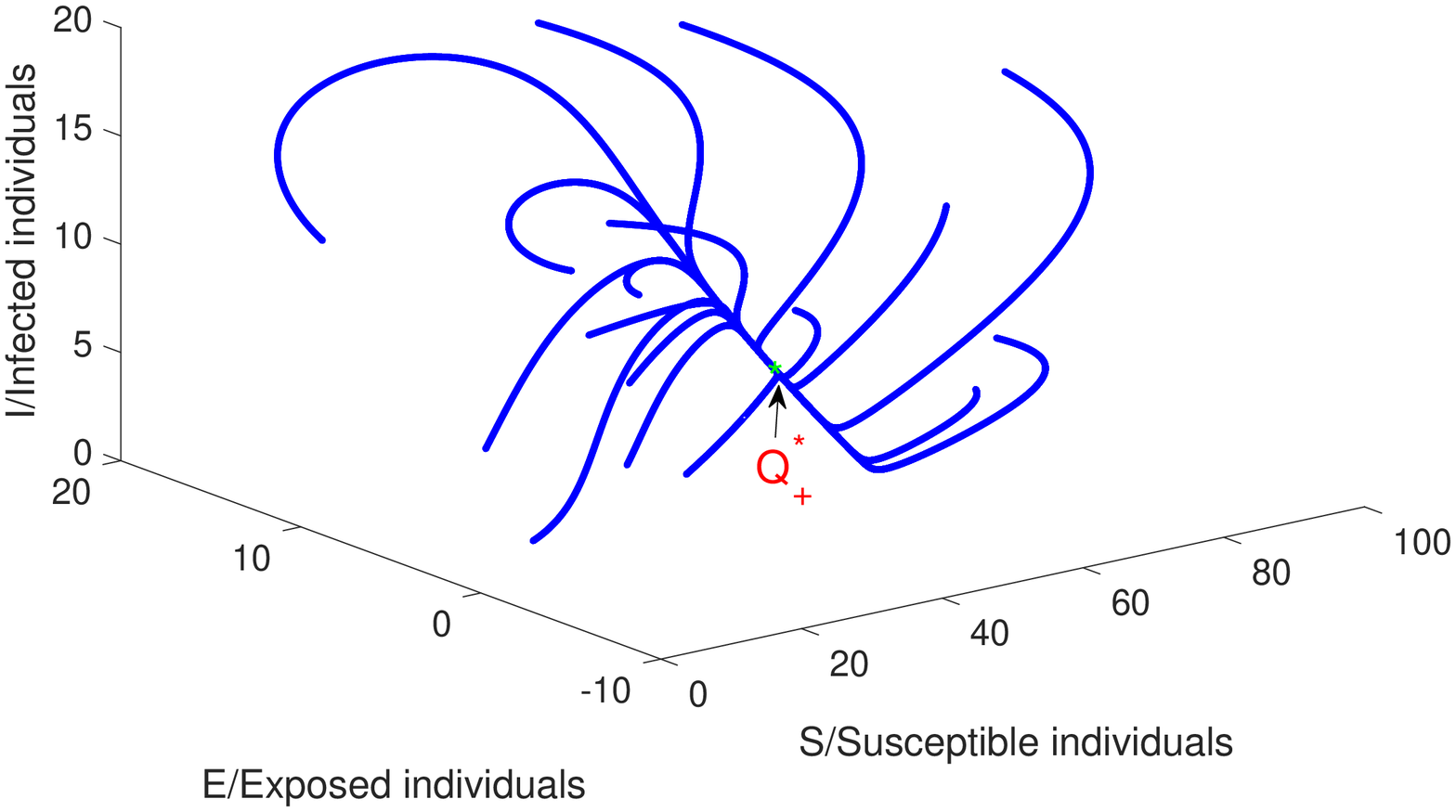}}}
					\caption{
						\footnotesize  Time histories and trajectories of system \eqref{e4} with different initial values for ~$N^*=60, \,k=0.02, \,\mu=0.011, ~\beta_1=0.0001, ~\beta_2=0.0003, ~\alpha_1=0.001, ~\alpha_2=0.001,$ and $ ~\gamma=0.001$. Here, ~$R_0>1$, ~$b_2>0, ~b_1>0,$ and $ ~b_0<0$. We can see that the trajectory of the system converges to $Q_+^*\approx(50.3925, 3.4953, 5.8255)$.  Here, $Q_+^*$ is globally asymptotically stable. }\label{F3}
				\end{center}
			\end{figure}
			
			\begin{figure}[!h]
				\begin{center}
					{\rotatebox{0}{\includegraphics[width=0.48 \textwidth,
							height=40mm]{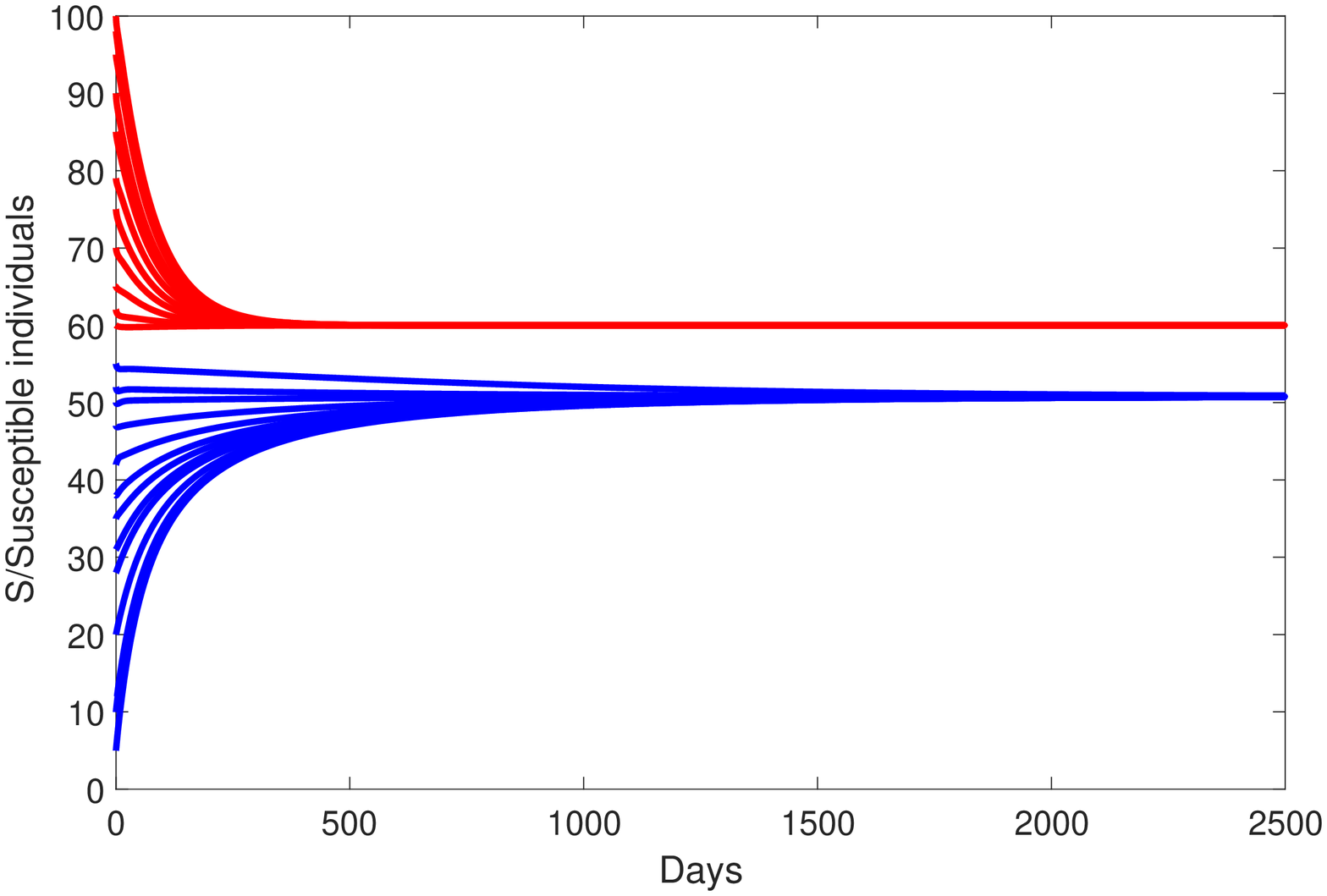}}} {\rotatebox{0}{\includegraphics[width=0.48
							\textwidth, height=40mm]{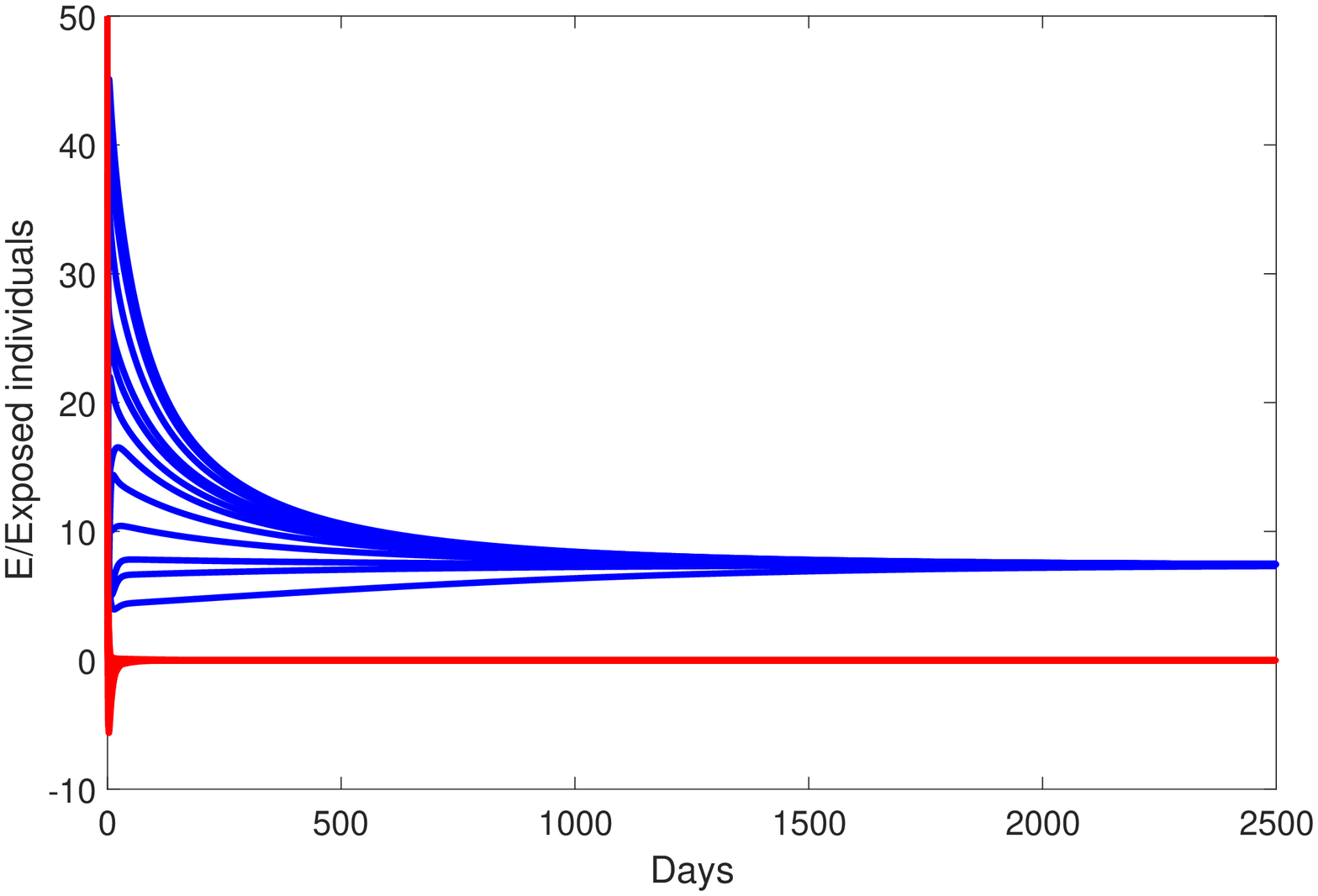}}}
					{\rotatebox{0}{\includegraphics[width=0.48 \textwidth,
							height=40mm]{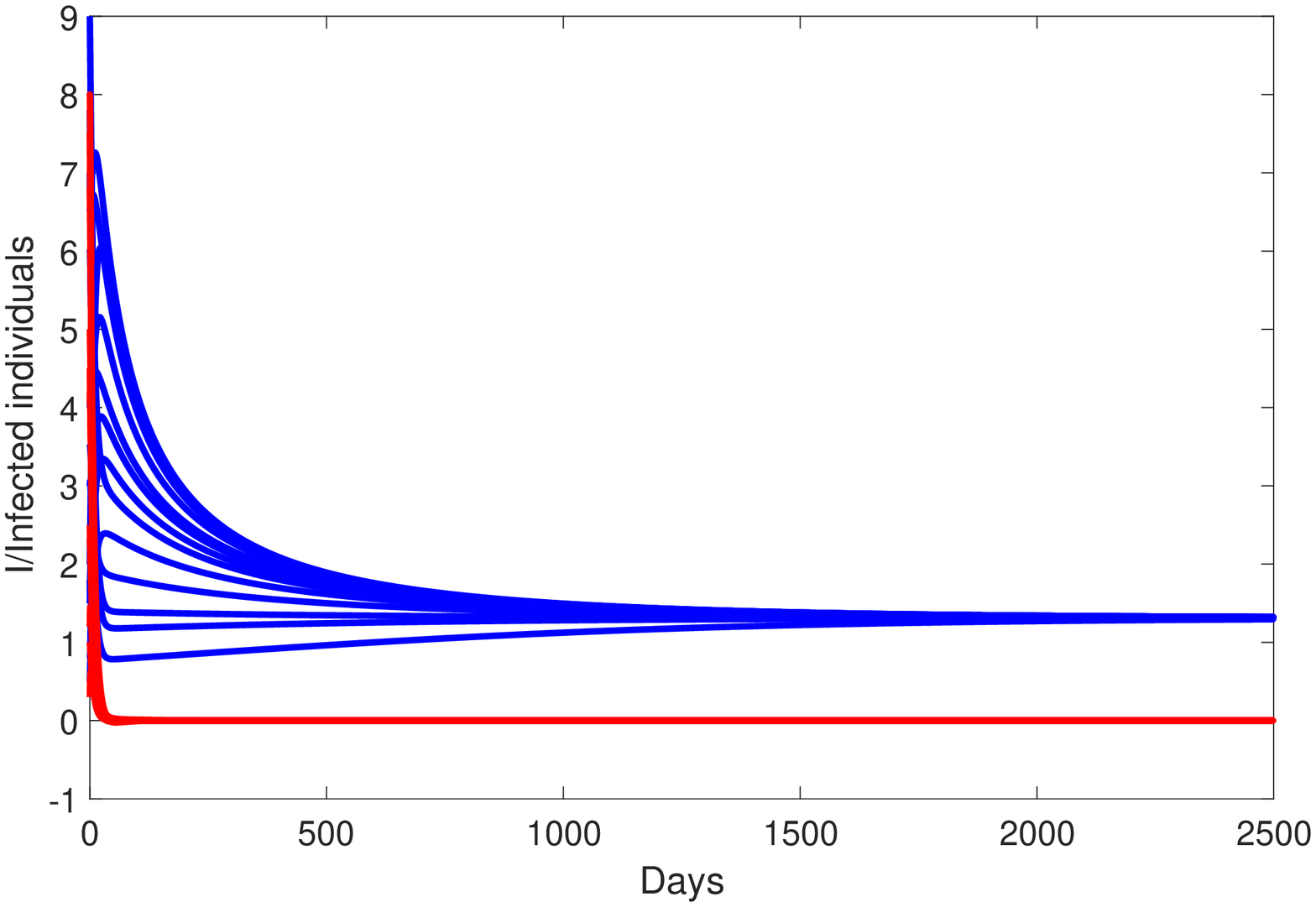}}} {\rotatebox{0}{\includegraphics[width=0.48
							\textwidth, height=40mm]{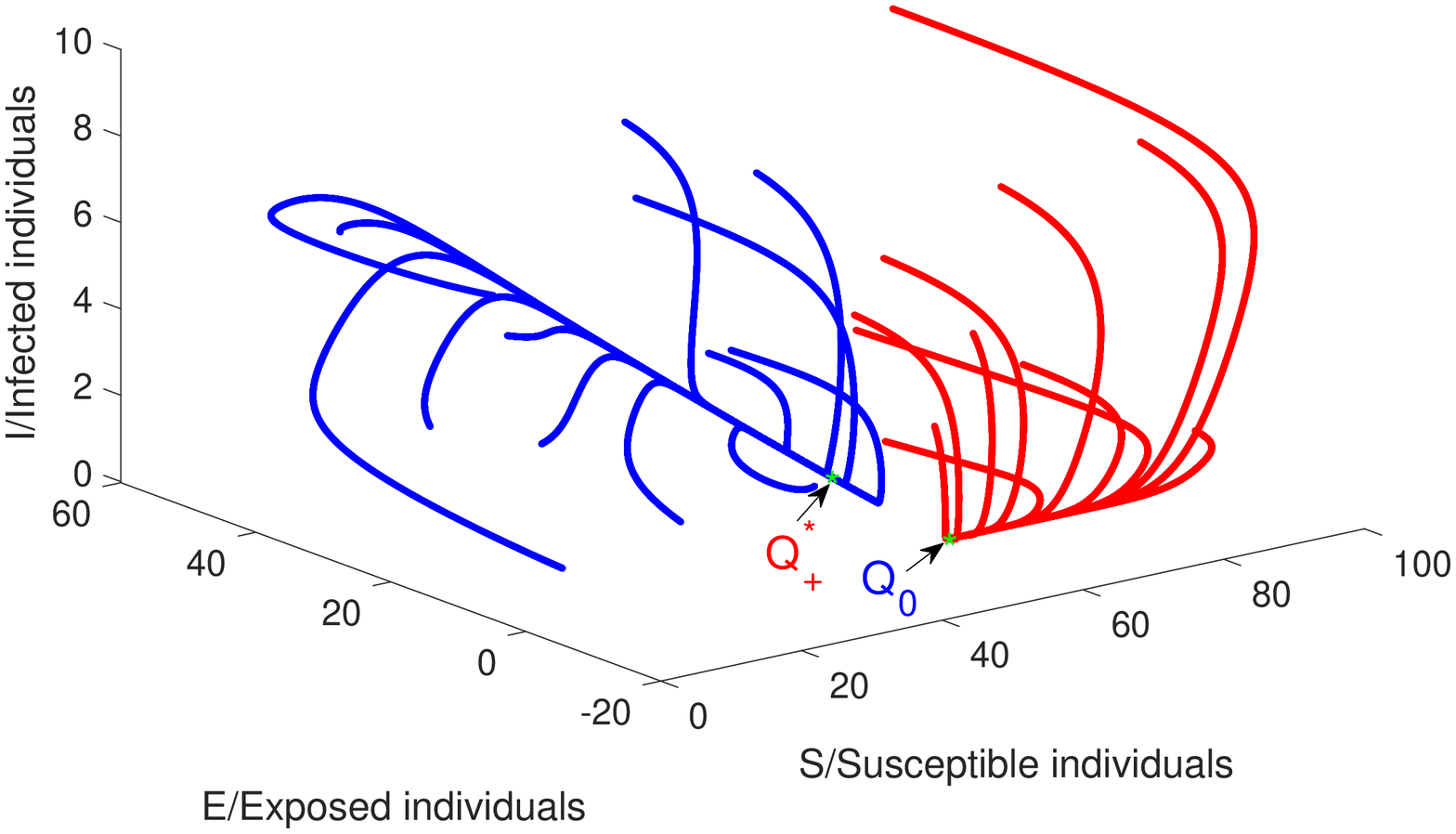}}}
					\caption{
						\footnotesize Time histories and trajectories of system \eqref{e4} with different initial values for ~$N^*=60, \,k=0.02, \,\mu=0.013, \,~\beta_1=0.0003, \, ~\beta_2=0.0001, ~\alpha_1=0.03, ~\alpha_2=0.03, $ and $~\gamma=0.1$. Here,  ~$R_c<R_0<1$, ~$b_2>0, ~b_1<0,$ and $ ~b_0>0$. We can see that the system displays bistability. Both $Q_+^*\approx(50.7976, 7.4129, 1.3120)$ and $Q_0$ are stable.}\label{F4}
				\end{center}
			\end{figure}
			
			\begin{figure}[!h]
				\begin{center}
					{\rotatebox{0}{\includegraphics[width=0.48 \textwidth,
							height=40mm]{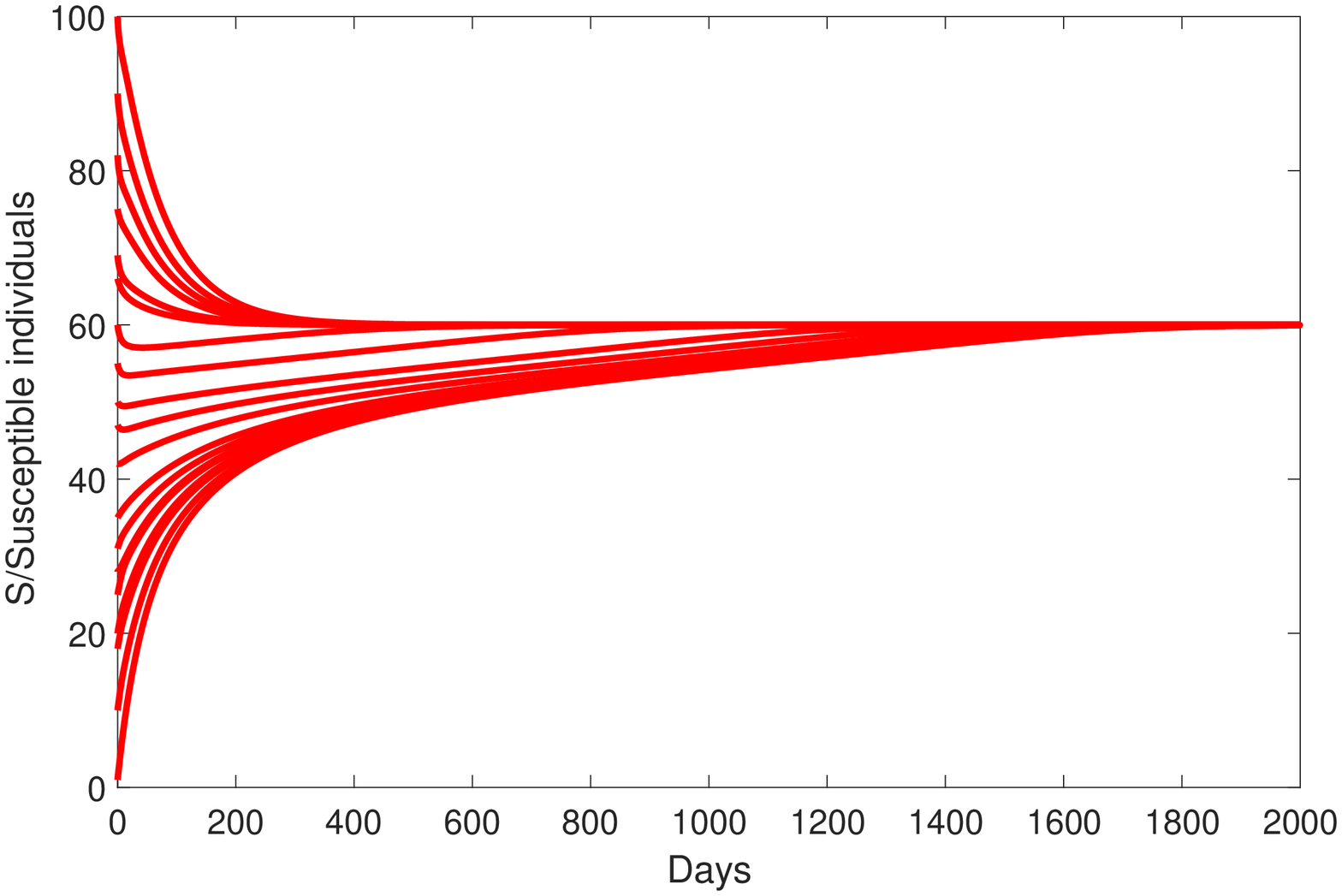}}} {\rotatebox{0}{\includegraphics[width=0.48
							\textwidth, height=40mm]{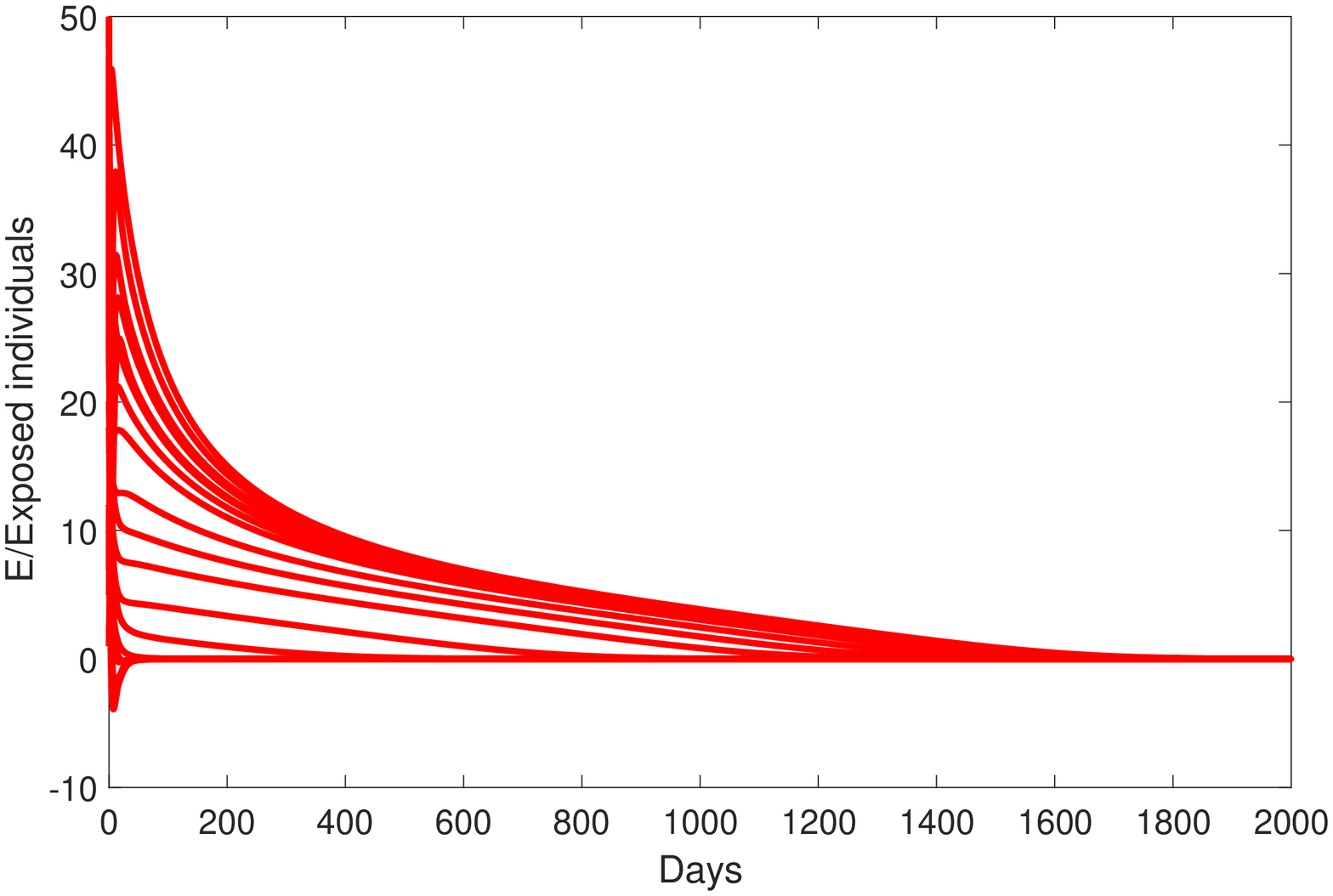}}}
					{\rotatebox{0}{\includegraphics[width=0.48 \textwidth,
							height=40mm]{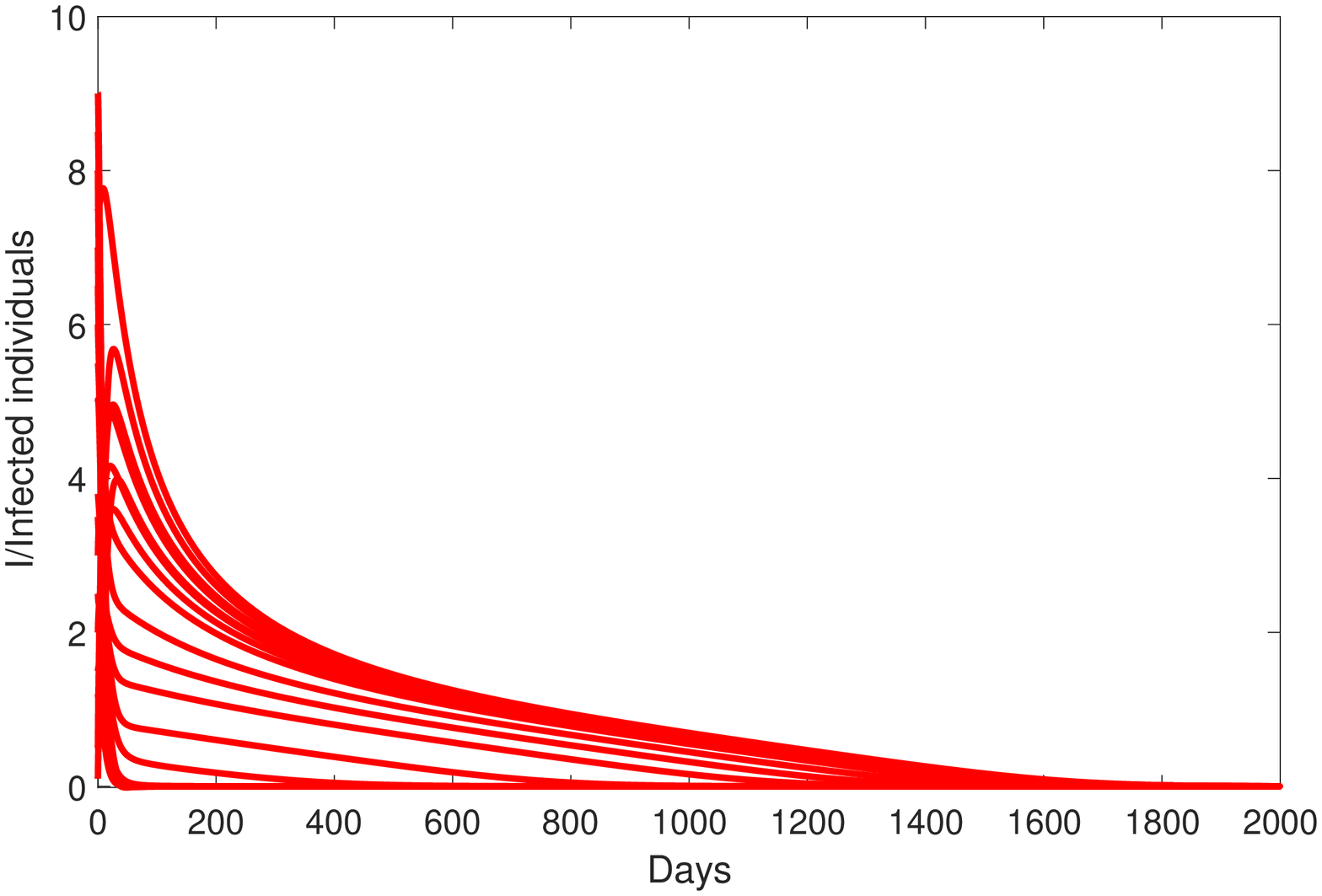}}} {\rotatebox{0}{\includegraphics[width=0.48
							\textwidth, height=40mm]{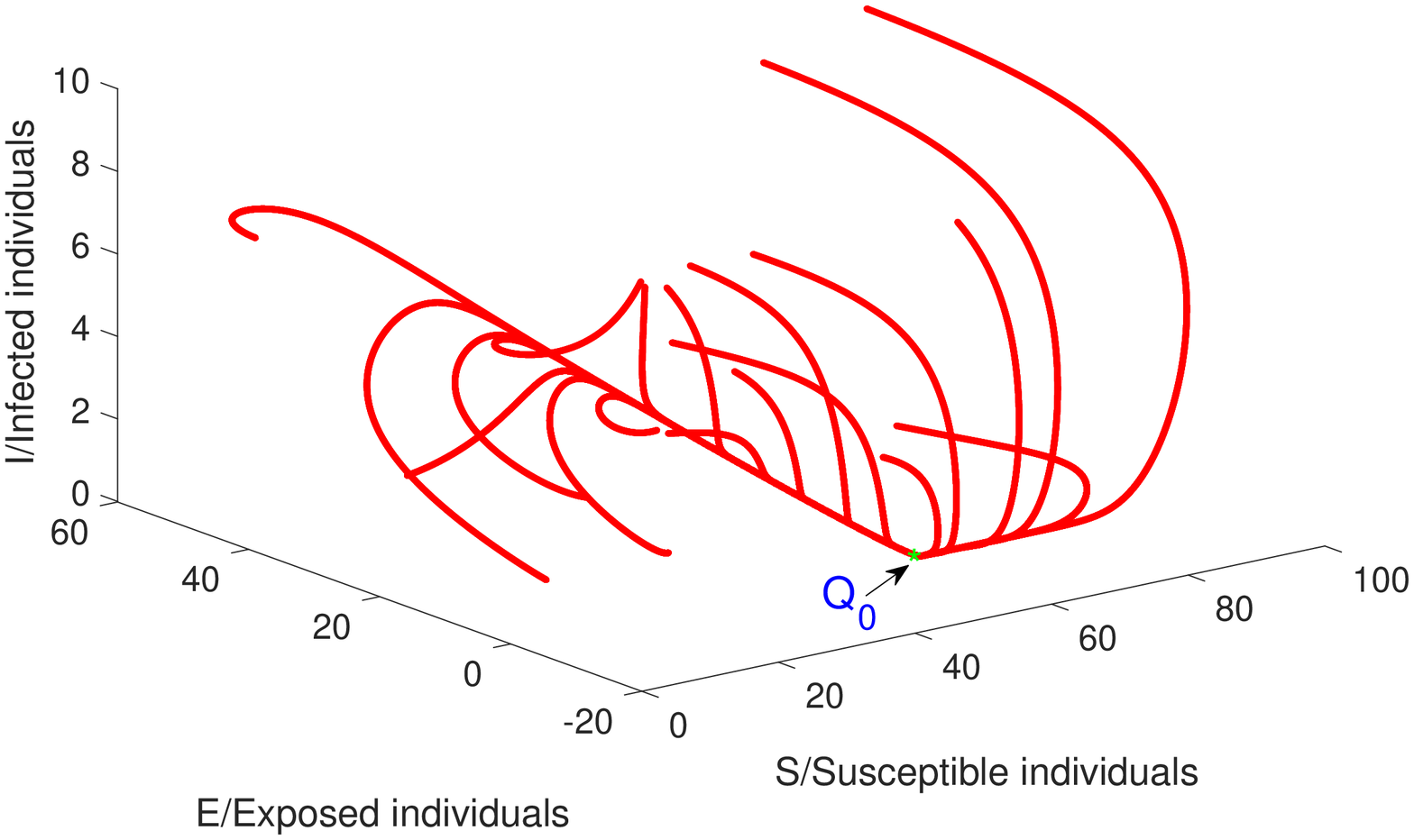}}}
					\caption{
						\footnotesize Time histories and trajectories of system \eqref{e4} with different initial values for ~$N^*=60, \,k=0.02, \,\mu=0.013, \,~\beta_1=0.0003, \, ~\beta_2=0.0001,\, ~\alpha_1=0.01, ~\alpha_2=0.01,$ and $ ~\gamma=0.1$. Here, ~$R_0<R_c<1$, ~$b_2>0, ~b_1<0,$ and $ ~b_0>0$. We can see that the trajectory of the system converges to $Q_0$. Here, $Q_0$ is stable. }\label{F5}
				\end{center}
			\end{figure}
			

			\begin{figure}[!h]
				\begin{center}
					
					\subfigure[]{
						\includegraphics[width=0.48 \textwidth,
						height=40mm]{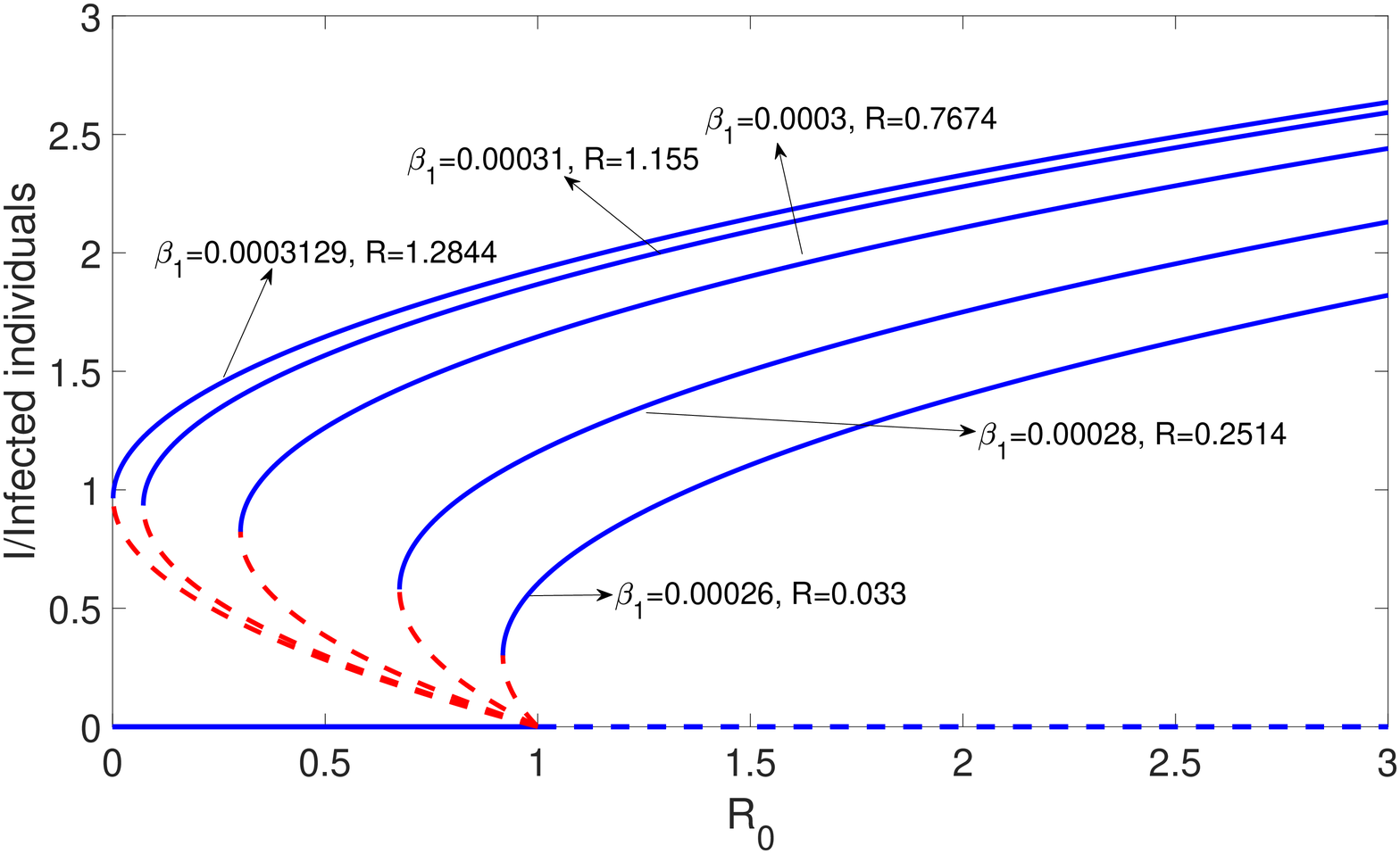}}
					\subfigure[]{
						\includegraphics[width=0.48 \textwidth,
						height=40mm]{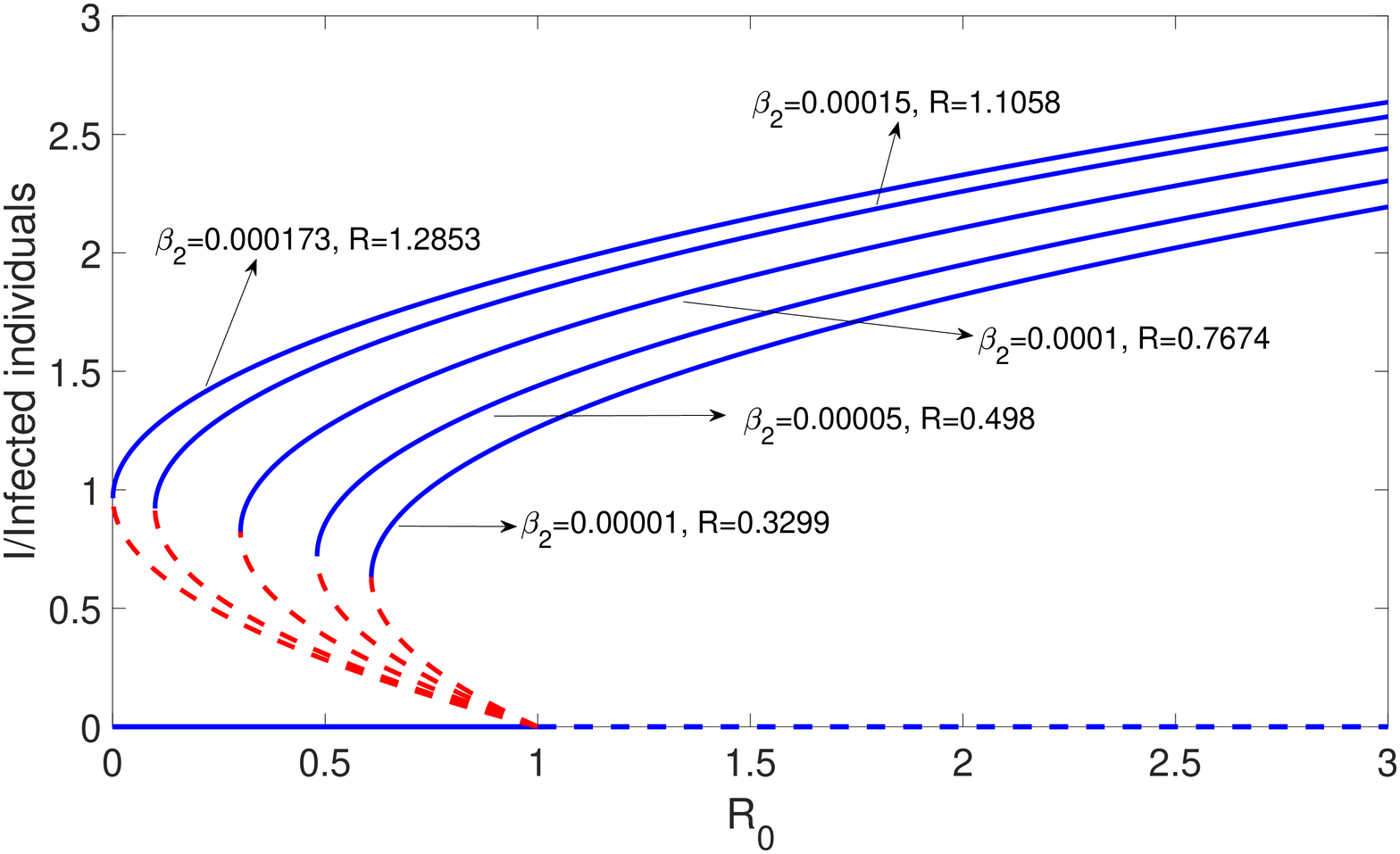}}
					\subfigure[]{
						\includegraphics[width=0.48 \textwidth,
						height=40mm]{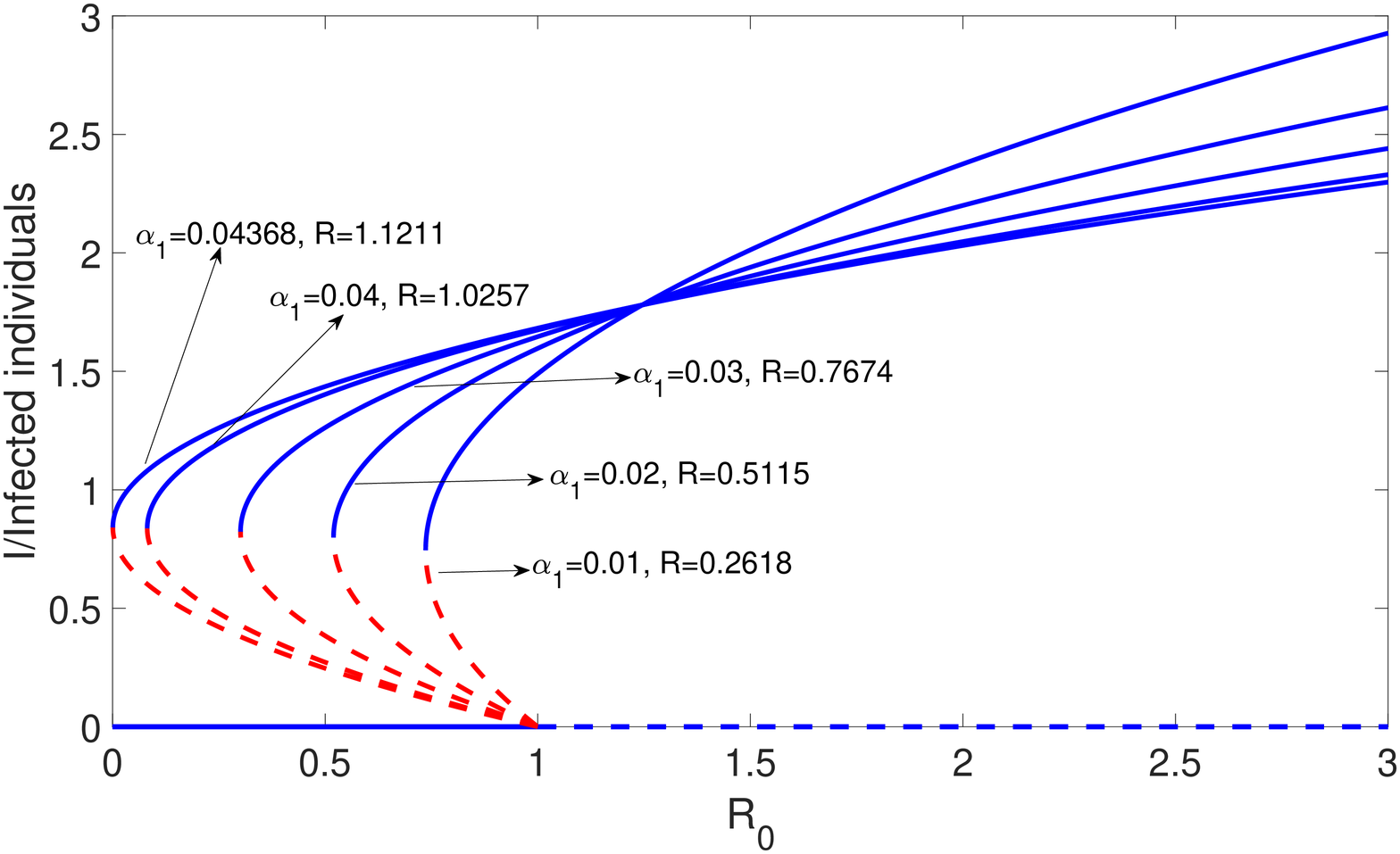}}
					\subfigure[]{
						\includegraphics[width=0.48 \textwidth,
						height=40mm]{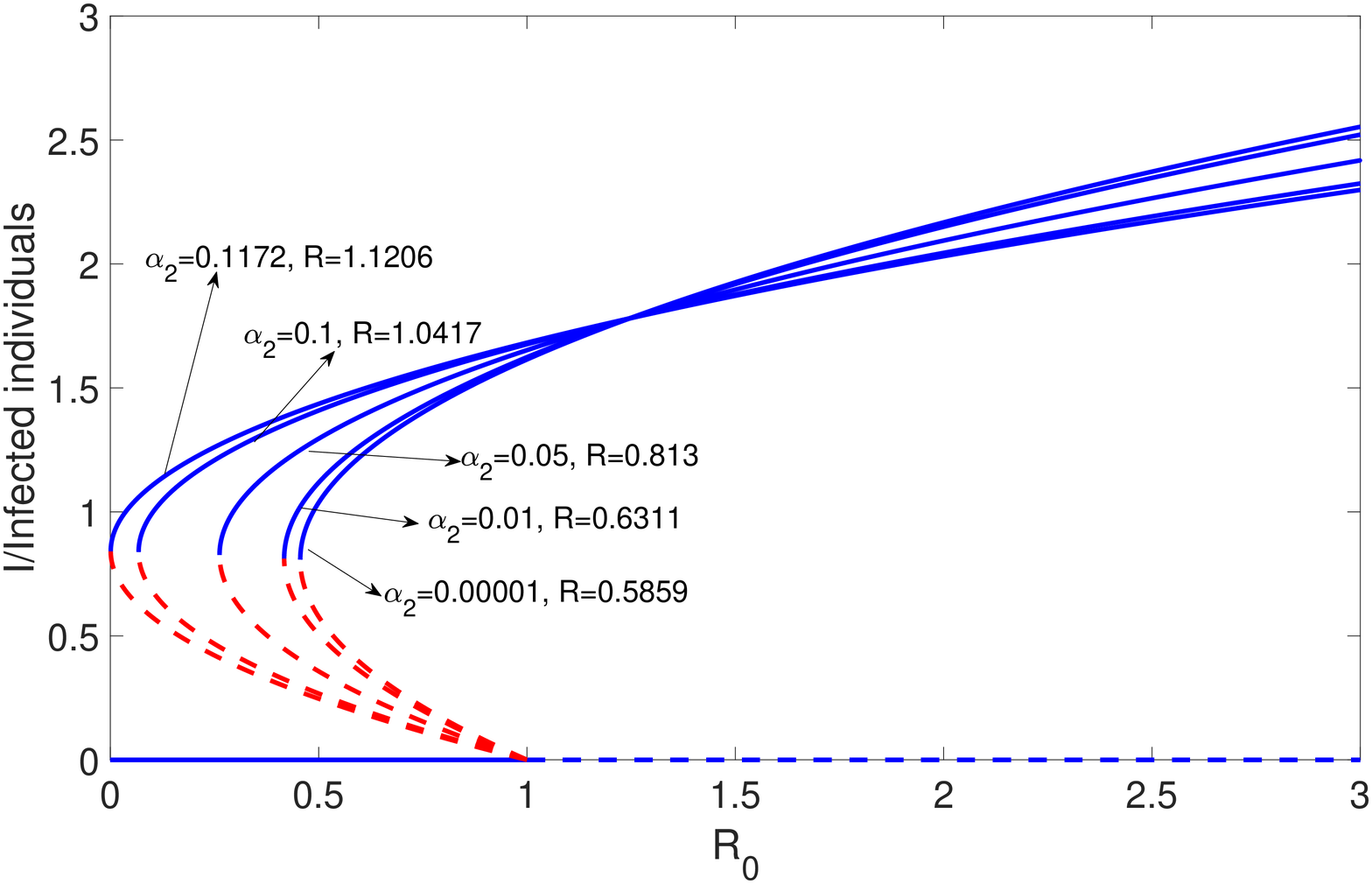}}
					\caption{
						\footnotesize  The effects of changing contact rate $\beta_1$ (a),
						changing contact rate $\beta_2$ (b),  changing contact rate $\alpha_1$ (c) and  changing contact rate $\alpha_2$ (d),
						 on system \eqref{e4}.
						 Here, all other parameter values are the same as those used in Figure \ref{F6}. }\label{F7}
				\end{center}
			\end{figure}

		\end{document}